\tikzstyle{state}+=[minimum size = 6mm, inner sep=0,outer sep=1]
\tikzset{->,>=stealth'}
\newcolumntype{L}{X}
\newcolumntype{R}{>{\raggedleft\arraybackslash}X}
\newcolumntype{C}{>{\centering\arraybackslash}X}
\definecolor{color1}{HTML}{de2d26} 
\definecolor{graph1}{HTML}{e66101}
\definecolor{graph2}{HTML}{fdb863}
\definecolor{graph3}{HTML}{b2abd2}
\definecolor{graph4}{HTML}{5e3c99}
\newcommand{\secspace}{\vspace*{-0.5em}}
\newcommand{\ssecspace}{\vspace*{-0.2em}}
\newcommand{\myspace}{\vspace*{-0.3em}}
\title{Continuous-Time Markov Decisions based on Partial Exploration}
\author{Pranav Ashok\inst{1}\and Yuliya Butkova\inst{2}\and Holger Hermanns\inst{2}\and Jan K\v{r}et\'{i}nsk\'{y}\inst{1}}
\institute{Technical University of Munich, Germany\\
  \and
  Saarland University, Saarbr\"ucken, Germany}
\begin{document}

\maketitle

\secspace

\begin{abstract}
We provide a framework for speeding up algorithms for time-bounded reachability analysis of continuous-time Markov decision processes.
The principle is to find a small, but almost equivalent \emph{subsystem} of the original system and only analyse the subsystem.
Candidates for the subsystem are identified through simulations and iteratively enlarged until runs are represented in the subsystem with high enough probability.
The framework is thus dual to that of abstraction refinement.
We instantiate the framework in several ways with several traditional algorithms and experimentally confirm orders-of-magnitude speed ups in many cases.


\end{abstract} 

\secspace\ssecspace

\section{Introduction}
\ssecspace

\emph{Continuous-time Markov decision processes} (CTMDP)~\cite{Bertsekas95,Sennott99,Feinberg04} are the natural real-time extension of (discrete-time) Markov decision processes (MDP). They can likewise be viewed as non-deterministic extensions of continuous-time Markov chains (CTMC). As such, CTMDP feature probabilistic and non-deterministic behaviour as well as random time delays governed by exponential probability distributions. Prominent application areas of CTMDP include operations research~\cite{DBLP:journals/jacm/BrunoDF81,Feinberg04}, power management and scheduling~\cite{DBLP:journals/tcad/QiuQP01}, networked, distributed systems~\cite{DBLP:conf/srds/HaverkortHK00,DBLP:conf/sosp/GhemawatGL03}, as well as epidemic and population processes~\cite{doi:10.1287/opre.29.5.971}. Moreover, CTMDPs are the core semantic model underlying formalisms such as generalised stochastic Petri nets, Markovian stochastic activity networks, and interactive Markov chains~\cite{DBLP:conf/apn/EisentrautHK013}.

A large variety of properties  can be expressed using logics such as CSL ~\cite{DBLP:conf/cav/AzizSSB96}. 
Apart from classical techniques from the MDP context, the analysis of such properties relies fundamentally on the problem of time-bounded reachability (TBR), i.e.~\emph{what is the maximal/minimal probability to reach a given state within a given time bound}. Since this is the cornerstone of the analysis, a manifold of algorithms have been proposed for TBR  \cite{DBLP:conf/tacas/BaierHHK04,DBLP:conf/fsttcs/BrazdilFKKK09,DBLP:conf/qest/NeuhausserZ10,fearnley_et_al:LIPIcs:2011:3354,DBLP:journals/cor/BuchholzS11,DBLP:conf/fsen/HatefiH13,DBLP:conf/atva/ButkovaHHK15}.
While the algorithmic approaches are diverse, relying on uniformisation and various forms of discretization, they are mostly back-propagating the values computed, i.e.~in the form of value iteration.

Not surprisingly, all these algorithms naturally process the state space of the CTMDP in its entirety. In this work we instead suggest a framework that enables TBR analysis with guaranteed precision while often exploring only a small, property-dependent part of the state space. Similar ideas have appeared for (discrete-time) MDPs and unbounded reachability \cite{atva} or mean payoff \cite{cav17}. 
These techniques are based on \emph{asynchronous} value-iteration approaches, originally proposed in the probabilistic planning world, such as bounded real-time dynamic programming (BRTDP)~\cite{brtdp}. 
Intuitively, the back-propagation of values (value iteration steps) are not performed on all states in each iteration (synchronously), but always only the ``interesting'' ones are considered (asynchronously); in order to bound the error in this approach, one needs to compute both an under- and an over-approximation of the actual value.

In other words, the main idea is to keep track of (under- and over-)approximation of the value when accepting that we have no information about the values attained in certain states.
Yet if we can determine that these states are reached with very low probability, their effect on the actual value is provably negligible and thus the lack of knowledge only slightly increases the difference between the under- and over-approximations.
To achieve this effect, the algorithm of~\cite{atva} alternates between two steps: (i) simulating a run of the MDP using a (hopefully good) scheduler, and (ii) performing the standard value iteration steps on the states visited by this run.

It turns out that this idea cannot be transferred to the continuous-time setting easily.
In technical terms, the main issue is that the value iteration in this context takes the form of synchronous back-propagation, which when implemented in an asynchronous fashion results in memory requirements that tend to dominate the memory savings expectable due to partial exploration.

Therefore, we twist the above approach and present a yet simpler algorithmic strategy in this paper. Namely, our approach alternates between several simulation steps, and a subsequent run of TBR analysis only focussed on the already explored subsystem, instead of the entire state space. If the distance between under- and over-approximating values is small enough, we can terminate; otherwise, running more simulations extends the considered state subspace, thereby improving the precision in the next round. Each run of the TBR analysis provides a scheduler for the subsystem that can be extended to be a scheduler on the original model. The extended scheduler obtained upon termination of our algorithm is guaranteed to be optimal for the TBR problem on the original CTMDP (up to user-defined precision).

There are thus two largely independent components to the framework, namely (i) a heuristic how to explore the system via simulation, and (ii) an algorithm to solve time-bounded reachability on CTMDP. The latter is here instantiated with some of the classic algorithms mentioned above, namely the first discretization-based algorithm \cite{DBLP:conf/qest/NeuhausserZ10} and the two most competitive improvements over it \cite{DBLP:journals/cor/BuchholzS11,DBLP:conf/atva/ButkovaHHK15}, based on uniformisation and untimed analysis. The former basically boils down to constructing a scheduler resolving the non-determinism effectively. We instantiate this exploration heuristics in two ways. Firstly, we consider a scheduler returned by the most recent run of the respective TBR algorithm, assuming this to yield a close-to-optimal scheduler, so as to visit the most important parts of the state space, relative to the property in question. Secondly, since this scheduler may not be available when working with TBR algorithms that return only the value, we also employ a scheduler resolving choices uniformly. Although the latter may look very straightforward, it turns out to already speed up the original algorithm considerably in many cases. This is rooted in the fact that that scheduler best represents the available knowledge, since the uniform distribution is the one with maximal entropy.

Depending on the model and the property under study, different ratios of the state space entirety need to be explored to achieve the desired precision. Furthermore, our approach is able to exploit that the reachability objective is of certain forms, in stark contrast to the classic algorithm that needs to perform the same computation irrespective of the concrete set of target states. Still, the approach we propose will naturally profit from future improvements in effectiveness of classic TBR analysis.  

We summarize our contribution as follows:\ssecspace
\begin{itemize}
	\item We introduce a framework to speed up TBR algorithms for CTMDP and instantiate it in several ways.
	It is based on a partial, simulation-based exploration of the state space spanned by a model.
	\item We demonstrate its effectiveness in combination with several classic algorithms, obtaining orders of magnitude speed ups in many experiments. We also illustrate the limitations of this approach on cases where the state space needs to be explored almost in its entirety.
	\item We conclude that our framework is a generic add-on to arbitrary TBR algorithms, often saving considerably more work than introduced by its overhead.
\end{itemize}

\secspace

\section{Preliminaries} \label{sec:prelim}
\ssecspace

In this section, we introduce some central notions.

A \emph{probability distribution} on a finite set $X$ is a mapping $\rho: X \to [0,1]$, such that $\sum_{x\in X} \rho(x) = 1$.
$\Distributions(X)$ denotes the set of all probability distributions on $X$.

\label{sec:ctmcp}
\begin{definition}
  A \emph{continuous-time Markov decision process} (CTMDP) is a tuple $\Ctmdp=(\initstate, S, \Act, \bfR, \goals)$ where $S$ is a finite set of \emph{states}, $\initstate$ is the initial state, $\Act$ is a finite set of \emph{actions}, $\bfR: S \times \Act \times S \to \mathbb{R}_{\geq 0}$ is a rate matrix and $\goals \subseteq S$ is a set of \emph{goal} states.
\end{definition}

For a state $s \in S$ we define the set of \emph{enabled actions} $\Act(s)$ as follows: $\Act(s) = \{\alpha \in \Act ~|~ \bfR(s,\alpha,s') > 0 \text{ for some } s'\} $. Those states $s'$ for which $\bfR(s,\alpha,s') > 0$ form the set of \emph{successor states of $s$ via $\alpha$} which is denoted as $\Succ(s, \alpha)$. W.\,l.\,o.\,g. we require that all sets $\Act(s)$ and $\Succ(s,\alpha)$ are non-empty. A state $s$, s.\,t. $\forall \alpha \in \Act(s):~\Succ(s,\alpha) = \{s\}$ is called \emph{absorbing}.


For a given state $s$ and action $\alpha \in\Act(s)$, we denote by $\exit{s,\alpha} = \sum_{s'} \bfR(s,\alpha,s')$ the \emph{exit rate} of $\alpha$ in $s$ and
$\trans(s,\alpha,s') = \bfR(s,\alpha,s') / \exit{s,\alpha}$.

\begin{wrapfigure}[18]{r}{0.4\textwidth}
	\vspace{-1.5em}
	\centering
	\scalebox{0.75}{	\centering
	\subfloat[]{
		\begin{tikzpicture}[node distance = {0.5cm and 1.2cm}, 
		v/.style = {draw, circle, thick}, 
		relv/.style = {circle, thick}, 
		none/.style = {draw=none,fill=none,inner sep=0,outer sep=0}, 
		f/.style = {draw, thick, circle, accepting}, 
		edge/.style = {-Latex, thick}, 
		rel/.style = {}, 
		ris/.style = {}, 
		ev/.style = {font=\scriptsize}, 
		initial distance=0.5cm, every initial by arrow/.style={-Latex,thick}, initial text={}
		]
		\node (init) [v, initial above] {0};
		\node (s) [v, below = of init] {1};
		
		\node (l1) [v] at ($(init) + (0,-2.3)$) {2};
		\node (l2) [v, below = of l1] {3};
		\node (l3) [v, below = of l2] {4};
		\node (l4) [v, below = of l3] {5};
		\node (lc1) [] at ($(init) + (-1,-2)$) {};
		\node (rc1) [] at ($(init) + (1,-2)$) {};
		\node (lc4) [] at ($(lc1) + (0,-4)$) {};		
		\node (rc4) [] at ($(rc1) + (0,-4)$) {};		
		

		\node (g) [f]  at ($(s) + (0,-5.7)$)  {G};
		
		\node (xa) [none] at ($(s)!0.5!(rc1)$) {};
		\node (xb) [none] at ($(s)!0.5!(lc1)$) {};

		\path 
			(init)  edge [midway, right] node[] {$1.1$} (s)
			(s)	edge [midway, right, -, dashed, rel] node[] {$\alpha$} (xa)
					edge [midway, left, -, dashed, ris] node {$\beta$} (xb)	
			(l1)	edge [midway, left] node[ev] {$1$} (l2)
			(l2)	edge [midway, left] node[ev] {$1$} (l3)
			(l3)	edge [midway, left] node[ev] {$1$} (l4)
			(l4)	edge [midway, left] node[ev] {$1$} (g)
			(xb)	edge [right] node[ev] {$2$} (l1)
			(g)	edge [midway, left, loop right] node[] {$1$} (g)
			
	;
		\draw [edge, rounded corners, ris] (xb) node[ev]at($(xb)!0.5!(lc1)+(-0.3,0.05)$){$1.1$} -- ($(lc1)$) -- ($(lc4)$) -- (g);
		\draw [edge, rounded corners, rel] (xa) node[ev]at($(xa)!0.5!(rc1)+(0.3,0.05)$){$0.5$} -- ($(rc1)$) -- ($(rc4)$) -- (g);
		\end{tikzpicture}
		\label{fig:ctmdp:a}
	}%
		\subfloat[]{
		\begin{tikzpicture}[node distance = {0.5cm and 1.2cm}, 
		v/.style = {draw, circle, thick}, 
		relv/.style = {circle, thick, fill=teal}, 
		none/.style = {draw=none,fill=none,inner sep=0,outer sep=0}, 
		f/.style = {draw, thick, circle, accepting}, 
		edge/.style = {-Latex, thick}, 
		rel/.style = {}, 
		ris/.style = {}, 
		ev/.style = {font=\scriptsize}, 
		initial distance=0.5cm, every initial by arrow/.style={-Latex,thick}, initial text={}
		]
		\node (init) [v, initial above] {0};
		\node (s) [v, below = of init] {1};
		
		\node (l1) [v] at ($(init) + (0,-2.3)$) {2};
		\node (lc1) [] at ($(init) + (-1,-2)$) {};
		\node (rc1) [] at ($(init) + (1,-2)$) {};
		\node (lc4) [] at ($(lc1) + (0,-4)$) {};		
		\node (rc4) [] at ($(rc1) + (0,-4)$) {};		
		

		\node (g) [f]  at ($(s) + (0,-5.7)$)  {G};
		
		\node (xa) [none] at ($(s)!0.5!(rc1)$) {};
		\node (xb) [none] at ($(s)!0.5!(lc1)$) {};

		\path 
			(init)  edge [midway, right] node[] {$1.1$} (s)
			(s)	edge [midway, right, -, dashed, rel] node[] {$\alpha$} (xa)
					edge [midway, left, -, dashed, ris] node {$\beta$} (xb)	
			(l1)	edge [midway, left, loop below] node[ev] {$1$} (l1)
			(xb)	edge [right] node[ev] {$2$} (l1)
			(g)	edge [midway, left, loop right] node[] {$1$} (g)
			
	;
		\draw [edge, rounded corners, ris] (xb) node[ev]at($(xb)!0.5!(lc1)+(-0.3,0.05)$){$1.1$} -- ($(lc1)$) -- ($(lc4)$) -- (g);
		\draw [edge, rounded corners, rel] (xa) node[ev]at($(xa)!0.5!(rc1)+(0.3,0.05)$){$0.5$} -- ($(rc1)$) -- ($(rc4)$) -- (g);
		\end{tikzpicture}
		\label{fig:ctmdp:b}
	}%
	}
	\caption{Example CTMDPs.}
	\label{fig:ctmdp-prelim}
\end{wrapfigure}
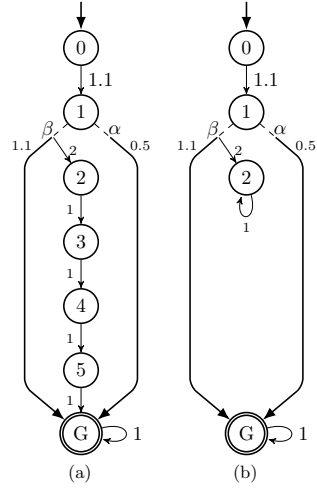

An example CTMDP is depicted in Fig. \ref{fig:ctmdp:a}. Here states are depicted in circles and are labelled with numbers from $0$ to $5$. The goal state $G$ is marked with a double circle. Dashed transitions represent available actions, e.g. state $1$ has two enabled actions $\alpha$ and $\beta$. A solid transition labelled with a number denotes the rate, e.g. $\bfR(1,\beta,G) = 1.1$, therefore there is a solid transition from state $1$ via action $\beta$ to state $G$ with rate $1.1$. If there is only one enabled action for a state, we only show the rates of the transition via this action and omit the action itself. For example, state $0$ has only 1 enabled action (lets say $\alpha$) and therefore it only has outgoing solid transition with rate $1.1 = \bfR(0,\alpha,1)$.


The system starts in the initial state $s_0 = \initstate$. While being in a state $s_0$, the system picks an action $\alpha_0 \in \Act(s)$. When an action is picked the CTMDP resides in $s_0$ for the amount of time $t_0$ which is sampled from exponential distribution with parameter $\exit{s_0,\alpha_0}$. Later in this paper we refer to this as \emph{residence time} in a state. After $t_0$ time units the system transitions into one of the successor states $s_1 \in \Succ(s_0, \alpha_0)$ selected randomly with distribution $\trans(s_0,\alpha_0,\cdot)$. After this transition the process is repeated from state $s_1$ forming an \emph{infinite path} $\path = s_0 \overset{\alpha_0,t_0}{\longrightarrow} s_1 \overset{\alpha_1,t_1}{\longrightarrow} s_2 \ldots$. A finite prefix of an infinite path is called a \emph{(finite) path}. We will use $\last{\path}$ to denote the last state of a finite path $\path$. We will denote the set of all finite paths in a CTMDP with $\paths$, and the set of all infinite paths with $\pathsinf$. 

CTMDPs pick actions with the help of \emph{schedulers}. A scheduler is a measurable\footnote{Measurable with respect to the standard $\sigma$-algebra on the set of paths~\cite{DBLP:conf/qest/NeuhausserZ10}.} function $\straa:\paths\times\Realsplus \to \Distributions(\Act)$ such that $\straa(\path, t) \in \Act(\last{\path})$. Being in a state $s$ at time point $t$ the CTMDP samples an action from $\straa(\path, t)$, where $\path$ is the path that the system took to arrive in $s$. We denote the set of all schedulers with $\straas$.

Fixing a scheduler $\straa$ in a CTMDP $\Ctmdp$, the unique probability measure $\bfP[\Ctmdp]{\straa}$ over the space of all infinite paths can be obtained ~\cite{DBLP:phd/de/Neuhausser2010}, denoted also by $\bfP[]{\straa}$ when $\Ctmdp$ is clear from context.

\myspace

\subsection*{Optimal Time-Bounded Reachability}
\ssecspace

Let $\Ctmdp = (\initstate, S, \Act, \bfR, \goals)$ be a CTMDP, $s \in S$, $T \in\Realsplus$ a time bound, and $\opt \in \{\sup, \inf\}$. The \emph{optimal (time-bounded) reachability probability} (or \emph{value}) of state $s$ in $\Ctmdp$ is defined as follows:
$$\val{\Ctmdp}{s}{T} := \opt\limits_{\straa \in \straas} \bfP[\Ctmdp]{\straa}\left[\treach{T} \goals\right],$$
where $\treach{T} \goals = \{s_0 \overset{\alpha_0,t_0}{\longrightarrow} s_1 \overset{\alpha_1,t_1}{\longrightarrow} s_2 \ldots \mid s_0 = s \land \exists i:  s_i \in \goals \land \sum_{j=0}^{i-1} t_j \leq T\}$ is the set of paths starting from $s$ and reaching $\goals$ before $T$. 

The \emph{optimal (time-bounded) reachability probability} (or \emph{value}) of $\Ctmdp$ is defined as $\val{\Ctmdp}{}{T} = \val{\Ctmdp}{\initstate}{T}$.
A scheduler that achieves optimum for $\val{\Ctmdp}{}{T}$ is the \emph{optimal scheduler}. A scheduler that achieves value $v$, such that $||v-\val{\Ctmdp}{}{T}||_{\infty}<\varepsilon$ is called \emph{$\varepsilon$-optimal}.


\secspace

\section{Algorithm}\label{sec:algo}

\ssecspace

\begin{wrapfigure}[11]{r}{0.5\textwidth}
	\vspace{-2em}
	\centering
	\includegraphics[scale=1]{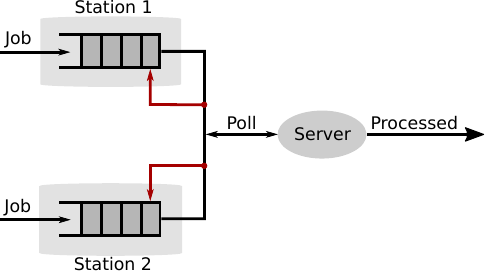}
	\caption{Schematic representation of polling system.}
	\label{fig:ps}
\end{wrapfigure}

In this work we target CTMDPs that have large state spaces, but only a small subset of those states is actually contributing significantly to the reachability probability. 

Consider, for example, the \emph{polling system} represented schematically in Figure \ref{fig:ps}. Here two stations store continuously arriving tasks in a queue. Tasks are to be processed by a server. If the task is processed successfully it is removed from the queue, otherwise it is returned back into the queue. State space of the CTMDP $\Ctmdp$ modelling this polling system is a tuple $(q_1,q_2,s)$, where $q_i$ is the amount of tasks in queue $i$ and $s$ is a state of the server (could be e.\,g. \emph{processing task, awaiting task}, etc.). 

One of the possible questions could be, for example, \emph{what is the maximum probability of both queues to be full after a certain time point}. This corresponds to goal states being of the form $(N,N,s)$, where $N$ is the maximal queue capacity and $s$ -- any state of the server. Given that both queues are initially empty, all the paths reaching goal states have to visit states $(q_1,q_2,\cdot)$, where $q_i = [0..N]$.

However, for similar questions, for example, \emph{what is the maximum probability of the first queue to be full after a certain time point}, the situation changes. Here goal states are of the form $(N,q_2,s)$, where $q_2 = 0..N$ and $s$ -- any state of the server. The scheduler that only extracts tasks from the second queue is the fastest to fill the first one and is therefore the optimal one. The set of states that are most likely visited when following this scheduler are those states where the size of the second queue is small. This naturally depends on the rates of task arrival and processing. Assuming that the size of the queue rarely exceeds 2 tasks, all the states $(\cdot,q_2,\cdot)$, where $q_2 = 3..N$ do not affect the reachability probability too much. 

As a more concrete example, consider the CTMDP of Fig. \ref{fig:ctmdp:a}. Here all the states in the centre have exit rate $1$ and form a long chain. Due to the length of this chain the probability to reach the goal state via these states within time 2 is very small. In fact, the maximum probability to reach the target state within 2 time units in the CTMDP on the left and the one on the right are exactly the same and equal $0.4584$. Thus, on this CTMDP, 40\%
of the state space can be reduced without any effect on the reachability value.

Classical model checking algorithms do not take into account any information about the property and perform exhaustive state-space exploration. Given that only a subset of states is relevant to the reachability value, these algorithms may perform many unnecessary computations. 

\ssecspace

\subsection*{Our Solution}\label{sec:algo:solution}
\ssecspace

Throughout this section we work with a CTMDP $\Ctmdp=(\initstate, S, \Act, \bfR, \goals)$ and a time bound $T \in \Realsplus$.

The main contribution of this paper is a simple framework for solving the time-bounded reachability objective in CTMDPs without considering their whole state-space. This framework in presented in Algorithm \ref{alg:main}.
The algorithm involves the following major steps:

\begin{algorithm}[h]
	\caption{$\Algo$}\label{alg:main}
	\hspace*{\algorithmicindent} \textbf{Input:}
	CTMDP $\Ctmdp=(\initstate, S, \Act, \bfR, \goals)$,
	time bound $T$,
	precision $\varepsilon$\\
	\hspace*{\algorithmicindent} \textbf{Output:} 
	$(\ell,u)\in [0,1]^2 
	$ such that $\ell \leqslant\val{}{}{T} \leqslant u$ and $u-\ell < \varepsilon$ and \\
	\hspace*{2cm}$\varepsilon-$ optimal scheduler $\straa$ for $\val{\Ctmdp}{}{T}$
	\begin{algorithmic}[1]
		
		\If{$\initstate \in G$}
			\Return $(1,1)$, and an arbitrary scheduler $\pi \in \straas$
		\EndIf
		
		\State $\ell=0,u=1$ 

		\State $\straasim = \straauni$
		\State $S' = \{\initstate\}$
		
		\While{$ u-\ell
			\geqslant \varepsilon$} \label{alg:main:condn}
		\State $S' = S' \cup \getRelevantSubset{\Ctmdp}{T}{\straasim}$ \label{alg:main:rel}
		\State $\underline{\Ctmdp} = \lowerSub{\Ctmdp}{S'}$,  
		$\overline{\Ctmdp} = \upperSub{\Ctmdp}{S'}$  \label{alg:main:lower-upper}
		\State	$\ell
					= \val{\underline{\Ctmdp}}{}{T}$, 
				$u
					= \val{\overline{\Ctmdp}}{}{T}$ \label{alg:main:lower-upper-bounds}
		\State $\straaopt \leftarrow$ optimal scheduler for $\val{\overline{\Ctmdp}}{}{T}$,
				$\lstraaopt \leftarrow$ optimal scheduler for $\val{\underline{\Ctmdp}}{}{T}$ \label{alg:main:lower-upper-sch}
		\State\label{alg:main:straasim} $\straasim = \ChooseScheduler(\straauni,\straaopt)$  \hfill // choose a scheduler for simulations
		\EndWhile
		
		\State 	$\forall t \in [0,T], \forall s \in S': \straa(s,t) = \lstraaopt(s,t)$
		\State 	$\forall t \in [0,T], \forall s \in S\setminus S': \straa(s,t) \leftarrow$ any $\alpha \in \Act(s)$ \hfill // extend optimal scheduler to $S$
		
		\State \Return $(\ell,u), \straa$ 
	\end{algorithmic}
\end{algorithm}

\begin{description}
	\item[Step 1] A ``relevant subset'' of the state-space $S' \subseteq S$ is computed (line \ref{alg:main:rel}).

	\item[Step 2] Using this subset, CTMDPs $\underline{\Ctmdp}$ and $\overline{\Ctmdp}$ are constructed (line \ref{alg:main:lower-upper}). We define functions $\upperSub{\Ctmdp}{S'}$ and $\lowerSub{\Ctmdp}{S'}$ later in this section.

	\item[Step 3] The reachability values of $\underline{\Ctmdp}$ and $\overline{\Ctmdp}$ are under- and over-approximations of the reachability value $\val{\Ctmdp}{}{T}$. The values are computed in line \ref{alg:main:lower-upper-bounds} along with the optimal schedulers in line \ref{alg:main:lower-upper-sch}.
	
	\item[Step 4] At line \ref{alg:main:straasim} a scheduler $\straasim$ is selected that is used later for obtaining the relevant subset.
	
	\item[Step 5] If the two approximations are sufficiently close, i.e. $\val{\overline{\Ctmdp}}{}{T} - \val{\underline{\Ctmdp}}{}{T} < \varepsilon$, 
	$\left[\val{\underline{\Ctmdp}}{}{T}, \val{\overline{\Ctmdp}}{}{T}\right]$ is the interval in which the actual reachability value lies. The algorithm is stopped and this interval along with the $\varepsilon$-optimal scheduler are returned.
    If not, the algorithm repeats from line \ref{alg:main:rel}, growing the relevant subset in each iteration. 
\end{description}

In the following section, we elucidate the first four steps and discuss several instantiations and variations of this framework.

\ssecspace

\subsection{Step 1: Obtaining the Relevant Subset} \label{sec:rel-subset}

\ssecspace

The main challenge of the approach is to extract a relatively small \emph{representative} set $S' \subseteq S$, for which $\val{\overline{\Ctmdp}}{}{T}$ and $\val{\underline{\Ctmdp}}{}{T}$ are close to the value $\val{\Ctmdp}{}{T}$ of the original model, i.e. $\val{\overline{\Ctmdp}}{}{T} - \val{\underline{\Ctmdp}}{}{T} < \varepsilon$. 
If this is possible, then instead of computing the probability of reaching goal in $\Ctmdp$, we can compute the same in $\overline{\Ctmdp}$ and $\underline{\Ctmdp}$ to get an $\varepsilon$-width interval in which the actual value is guaranteed to lie. If the sizes of $\overline{\Ctmdp}$ and $\underline{\Ctmdp}$ are relatively small, then the computation is generally much faster.

In this work we propose a heuristics for selecting the relevant subset based on simulations. Simulation of continuous-time Markov chains (CTMDPs with singleton set $\Act(s)$ for all states) is a widely used approach that performs very well in many practical cases. It is based on sampling a path of the model 
according to its probability space.
Namely, upon entering a state $s$ the residence time is sampled from the exponential distribution and then the successor state $s'$ is sampled randomly from the distribution $\trans(s,\alpha,s')$. Here $\alpha$ is the only action available in state $s$. The process is repeated from state $s'$ until a goal state is reached or the cumulative time over this path exceeds the time-bound.

However this approach only works for fully stochastic processes, which is not the case for arbitrary CTMDPs due to the presence of multiple available actions. In order to make the process fully stochastic one has to fix a scheduler that decides which actions are to be selected during the run of a CTMDP. 

\begin{algorithm}[t]
 \caption{$\getRelevantSubset{\Ctmdp}{T}{\straasim}$}\label{alg:get-relevant}
 \hspace*{\algorithmicindent} \textbf{Input:}
 	CTMDP $\Ctmdp=(\initstate, S, \Act, \bfR, \goals)$,
	time bound $T$,
	a scheduler $\straasim$\\
 \hspace*{\algorithmicindent} \textbf{Parameters:} 
	$\nSim \in \Naturals$ \\
 \hspace*{\algorithmicindent} \textbf{Output:} 
    $S' \subseteq S$
\begin{algorithmic}[1]
	\For{$(i = 0;~ i<\nSim;~ i=i+1)$}
		\State $\path = \initstate$
		\State $t = 0$
		\While{$t < T$ \textbf{and} $\last{\path} \not \in G$} 
			\State $s = \last{\path}$
			\State Sample action $\alpha$ from distribution $\Distributions(\Act(s)) = \straasim(\path, 0)$
			\State Sample $t'$ from exponential distribution with parameter $\exit{s, \alpha}$
			\State Sample a successor $s'$ of $s$ with distribution $\trans(s, \alpha, \cdot)$
			\State $\path = \path \overset{t'}{\longrightarrow} s'$
			\State $t = t + t'$
		\EndWhile
	
		\State add all states of $\path$ to $S'$
	\EndFor

\end{algorithmic}
\end{algorithm}

%
%
%
%

\begin{figure}[t]
	\subfloat[]{
		\begin{tikzpicture}[node distance = {0.5cm and 1.2cm}, v/.style = {draw, thick, circle}, f/.style = {draw, thick, circle, accepting}, edge/.style = { -Latex}, dottededge/.style = {dotted, -Latex}, initial distance=0.5cm, every initial by arrow/.style={-Latex,thick}, initial text={}]
		\node (init) [v,initial above] {};
		
		\node (l1) [v, below left = of init] {};
		\node (l2) [v, below = of l1] {};
		\node (l3) [v, below = 1.4 of l2] {};
		\node (l4) [v, below = of l3] {};
		
		\node (r1) [v, below right = of init] {};
		\node (r2) [v, below = of r1] {};
		\node (r3) [v, below = 1.4 of r2] {};
		\node (r4) [v, below = of r3] {};
		
		\node (c1) [v] at ($(init) + (-0.6,-2)$) {};
		\node (c2) [v, below = of c1] {};
		\node (c3) [f, below = of c2] {};
		
		\draw [edge] (init) -> (l1);
		\draw [edge] (l1) -> (l2);
		\draw [dottededge] (l2) -> (l3);
		\draw [edge] (l3) -> (l4);
		
		\draw [edge] (init) -> (r1);
		\draw [edge] (r1) -> (r2);
		\draw [dottededge] (r2) -> (r3);
		\draw [edge] (r3) -> (r4);
		
		\draw [edge, rounded corners] (init) -- ($(init) + (-0.6,-0.5)$) -- (c1);
		\draw [edge] (c1) -> (c2);
		\draw [edge] (c2) -> (c3);
		\draw [edge,rounded corners] (c3) -- ($(c3) + (1.2,0.5)$) -- ($(init) + (0.6,-0.5)$) -- (init) ;
		\draw [edge,rounded corners] (l4) -- ($(l4) + (0.5,0)$) -- (c3) ;
		\end{tikzpicture}
, 		\label{fig:running-a}}%
	\hfill
	\subfloat[]{
		\begin{tikzpicture}[node distance = {0.5cm and 1.2cm}, v/.style = {draw, thick, circle, black!20}, f/.style = {draw, thick, circle, accepting, black!20}, edge/.style = {-Latex, black!20}, dottededge/.style = {dotted, -Latex}, initial distance=0.5cm, every initial by arrow/.style={-Latex,thick,highlight}, initial text={}, highlight/.style = {graph1, thick}]
		\node (init) [v, highlight,initial above] {};
		
		\node (l1) [v, highlight, below left = of init] {};
		\node (l2) [v, highlight, below = of l1] {};
		\node (l3) [v, below = 1.4 of l2] {};
		\node (l4) [v, below = of l3] {};
		
		\node (r1) [v, below right = of init] {};
		\node (r2) [v, below = of r1] {};
		\node (r3) [v, below = 1.4 of r2] {};
		\node (r4) [v, below = of r3] {};
		
		\node (c1) [v] at ($(init) + (-0.6,-2)$) {};
		\node (c2) [v, below = of c1] {};
		\node (c3) [f, below = of c2] {};
		
		\draw [edge, highlight] (init) -> (l1);
		\draw [edge, highlight] (l1) -> (l2);
		\draw [dottededge, black!10] (l2) -> (l3);
		\draw [edge] (l3) -> (l4);
		
		\draw [edge] (init) -> (r1);
		\draw [edge] (r1) -> (r2);
		\draw [dottededge, black!20] (r2) -> (r3);
		\draw [edge] (r3) -> (r4);
		
		\draw [edge, rounded corners] (init) -- ($(init) + (-0.6,-0.5)$) -- (c1);
		\draw [edge] (c1) -> (c2);
		\draw [edge] (c2) -> (c3);
		\draw [edge,rounded corners] (c3) -- ($(c3) + (1.2,0.5)$) -- ($(init) + (0.6,-0.5)$) -- (init) ;
		\draw [edge,rounded corners] (l4) -- ($(l4) + (0.5,0)$) -- (c3) ;
		\end{tikzpicture}
		\label{fig:running-b}}%
	\hfill
	\subfloat[]{
		\begin{tikzpicture}[node distance = {0.5cm and 1.2cm}, v/.style = {draw, circle, thick, black!20}, f/.style = {draw, thick, circle, accepting, black!20}, edge/.style = {-Latex, black!20}, dottededge/.style = {dotted, -Latex}, initial distance=0.5cm, every initial by arrow/.style={-Latex,thick,graph1}, initial text={}, highlight/.style = {graph1, thick}]
		\node (init) [v, highlight, initial above] {};
		
		\node (l1) [v, below left = of init] {};
		\node (l2) [v, below = of l1] {};
		\node (l3) [v, below = 1.4 of l2] {};
		\node (l4) [v, below = of l3] {};
		
		\node (r1) [v, below right = of init] {};
		\node (r2) [v, below = of r1] {};
		\node (r3) [v, below = 1.4 of r2] {};
		\node (r4) [v, below = of r3] {};
		
		\node (c1) [v, highlight] at ($(init) + (-0.6,-2)$) {};
		\node (c2) [v, highlight, below = of c1] {};
		\node (c3) [f, highlight, below = of c2] {};
		
		\draw [edge] (init) -> (l1);
		\draw [edge] (l1) -> (l2);
		\draw [dottededge, black!10] (l2) -> (l3);
		\draw [edge] (l3) -> (l4);
		
		\draw [edge] (init) -> (r1);
		\draw [edge] (r1) -> (r2);
		\draw [dottededge, black!20] (r2) -> (r3);
		\draw [edge] (r3) -> (r4);
		
		\draw [edge, highlight, rounded corners] (init) -- ($(init) + (-0.6,-0.5)$) -- (c1);
		\draw [edge, highlight] (c1) -> (c2);
		\draw [edge, highlight] (c2) -> (c3);
		\draw [edge,rounded corners] (c3) -- ($(c3) + (1.2,0.5)$) -- ($(init) + (0.6,-0.5)$) -- (init) ;
		\draw [edge,rounded corners] (l4) -- ($(l4) + (0.5,0)$) -- (c3) ;
		\end{tikzpicture}
		\label{fig:running-c}}%
	\caption{A simple CTMDP is presented in Fig. (\ref{fig:running-a}) with rates and action labels ignored. Fig. (\ref{fig:running-b}) shows a sampled run which ends on running out of time while exploring the left-most branch. Fig. (\ref{fig:running-c}) shows a simulation which ends on discovering a target state.}
\end{figure}
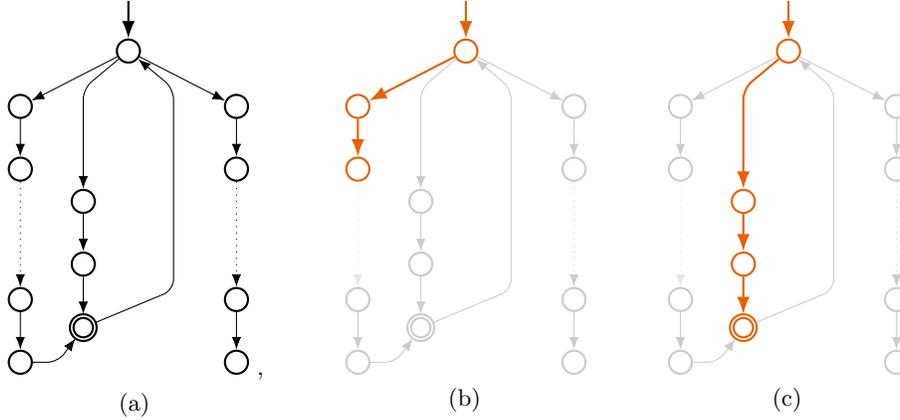

Our heuristic is presented in Algorithm \ref{alg:get-relevant}. It takes as input the CTMDP, time bound and a scheduler $\straasim$. 
The algorithms performs $\nSim$ simulations and outputs all the states visited during the execution.
Here $\nSim \in \Naturals$ is a parameter of the algorithm. 
Each simulation run starts in the initial state.
At first an action is sampled from $\Distributions(\Act(s)) = \straasim(\path,0)$ and then the simulation proceeds in the same way as described above for CTMCs by sampling residence times and successor states.
Notice that even though time-point $0$ is used for the scheduler, this does not affect the correctness of the approach, since it is only used as a heuristic to sample the subspace. 
In fact, one could instantiate $\getRelevantSubset{}{}{}$ with an arbitrary heuristic (e.\,g. from artificial intelligence domain, or one that is more targeted towards a specific model).
Correctness of the lower and upper bounds will not be affected by this.
However, termination of the algorithm cannot be ensured for any arbitrary heuristic.
Indeed, one has to make sure that the bounds will eventually converge to the value.

\begin{example}\label{epl:running}
Consider the CTMDP from Figure \ref{fig:running-a}. Figures \ref{fig:running-b} and \ref{fig:running-c} show two possible sampled paths.
The path in \ref{fig:running-c} reaches the target within the given time-bound and the path in \ref{fig:running-b} times out before reaching the goal state.
The relevant subset is thus all the states visited during the two simulations.
\end{example}

\ssecspace

\subsection{Step 2: Under- and Over-Approximating CTMDP}

\ssecspace

We will now explain line \ref{alg:main:lower-upper} of Algorithm \ref{alg:main}. Here we obtain two CTMDPs, such that the value of  $\underline{\Ctmdp}$ is a guaranteed lower bound, and the value of $\overline{\Ctmdp}$ is a guaranteed upper bound on the value of $\Ctmdp$.

Let $S' \subseteq S$ be the subset of states obtained in line \ref{alg:main:rel}. We are interested in extracting some information regarding the reachability value of $\Ctmdp$ from this subset. In order to do this, we consider two cases. (i) A pessimistic case, where all the unexplored states are non-goal states and absorbing (or \emph{sink states}); and (ii) an optimistic case, where all the unexplored states are indeed goals. It is easy to see that the ``pessimistic'' CTMDP $\underline{\Ctmdp}$ will have a smaller (or equal) value than the original CTMDP, which in turn will have a value smaller (or equal) than the ``optimistic'' CTMDP $\overline{\Ctmdp}$. Notice that for the reachability value the goal states can also be made absorbing and this will not change the value\footnote{This is due to the fact that for the reachability value, only what happens before the first arrival to the goal matters, and everything that happens afterwards is irrelevant.}. 
Before we define the two CTMDPs formally, we illustrate the construction on an example. Note that the fringe ``one-step outside'' of the relevant subset is still a part of the considered sub-CTMDPs.

\begin{figure}[t]
	\centering
	\subfloat[]{
		\begin{adjustbox}{width=0.24\textwidth}%
		\begin{tikzpicture}[node distance = {0.5cm and 1.2cm}, 
		v/.style = {draw, circle, thick, black!90}, 
		f/.style = {draw, thick, circle, accepting, black!90}, 
		dim/.style = {black!20},
		edge/.style = {-Latex, black!90}, 
		initial distance=0.5cm, every initial by arrow/.style={-Latex,thick, black!90}, initial text={}
		]
		\node (init) [v, initial above] {};
		
		\node (l1) [v, below left = of init] {};
		\node (l2) [v, below = of l1] {};
		\node (l3) [v, dim, below = 1.4 of l2] {};
		\node (l4) [v, dim, below = of l3] {};
		
		\node (r1) [v, dim, below right = of init] {};
		\node (r2) [v, dim, below = of r1] {};
		\node (r3) [v, dim, below = 1.4 of r2] {};
		\node (r4) [v, dim, below = of r3] {};
		
		\node (c1) [v] at ($(init) + (-0.6,-2)$) {};
		\node (c2) [v, below = of c1] {};
		\node (c3) [f, below = of c2] {};
		
		\draw [edge] (init) -> (l1);
		\draw [edge] (l1) -> (l2);
		\draw [dotted, thick, black!10] (l2) -> (l3);
		\draw [edge, dim] (l3) -> (l4);
		
		\draw [edge, dim] (init) -> (r1);
		\draw [edge, dim] (r1) -> (r2);
		\draw [dotted, thick, dim] (r2) -> (r3);
		\draw [edge, dim] (r3) -> (r4);
		
		
		\draw [edge, rounded corners] (init) -- ($(init) + (-0.6,-0.5)$) -- (c1);
		\draw [edge] (c1) -> (c2);
		\draw [edge] (c2) -> (c3);
		\draw [edge,rounded corners] (c3) -- ($(c3) + (1.2,0.5)$) -- ($(init) + (0.6,-0.5)$) -- (init) ;
		\draw [edge, rounded corners, dim] (l4) -- ($(l4) + (0.5, 0)$) -- (c3) ;
		\end{tikzpicture}
		\end{adjustbox}
		\label{fig:relaventsubset}
	}%
	\hspace{0.5cm}
	\subfloat[]{
		\begin{adjustbox}{width=0.3\textwidth}%
		\begin{tikzpicture}[node distance = {0.5cm and 1.2cm}, 
		v/.style = {draw, circle, thick, black!90}, 
		f/.style = {draw, thick, circle, accepting, black!90}, 
		dim/.style = {black!20},
		highlight/.style = {graph1, thick},
		edge/.style = {-Latex, black!90}, 
		initial distance=0.5cm, every initial by arrow/.style={-Latex,thick, black!90}, initial text={}
		]
		\node (init) [v,initial above] {};
		
		\node (l1) [v, below left = of init] {};
		\node (l2) [v, below = of l1] {};
		\node (l3) [v, dim, below = 1.4 of l2] {};
		\node (l4) [v, dim, below = of l3] {};
		
		\node (r1) [v, thick, graph1, below right = of init] {};
		\node (r2) [v, dim, below = of r1] {};
		\node (r3) [v, dim, below = 1.4 of r2] {};
		\node (r4) [v, dim, below = of r3] {};
		
		\node (c1) [v] at ($(init) + (-0.6,-2)$) {};
		\node (c2) [v, below = of c1] {};
		\node (c3) [f, below = of c2] {};
		
		\draw [edge] (init) -> (l1);
		\draw [edge] (l1) -> (l2);
		\draw [edge, dim] (l3) -> (l4);
		
		\draw [edge, highlight] (init) -> (r1);
		\draw [edge, dim] (r1) -> (r2);
		\draw [dotted, thick, dim] (r2) -> (r3);
		\draw [edge, dim] (r3) -> (r4);
		\node (x2) [v, thick, graph1] at ($(l2) + (0, -0.8)$) {};
		\draw [edge, highlight] (l2) -> (x2);
		
		\draw [dotted, dim] (x2) -> (l3);
		
		\draw [edge, rounded corners] (init) -- ($(init) + (-0.6,-0.5)$) -- (c1);
		\draw [edge] (c1) -> (c2);
		\draw [edge] (c2) -> (c3);
		\draw [edge,rounded corners] (c3) -- ($(c3) + (1.2,0.5)$) -- ($(init) + (0.6,-0.5)$) -- (init) ;
		\path  (x2) edge [in = 215, out=145, looseness =7, highlight] node {} (x2);
		\path  (r1) edge [in = -35, out=35, looseness =7, highlight] node {} (r1);
		\draw [edge,rounded corners, dim] (l4) -- ($(l4) + (0.5,0)$) -- (c3) ;
		\end{tikzpicture}
		\end{adjustbox}
		\label{fig:lowerbound}
	}%
	\hspace{0.5cm}
	\subfloat[]{
		\begin{adjustbox}{width=0.3\textwidth}%
		\begin{tikzpicture}[node distance = {0.5cm and 1.2cm}, 
		v/.style = {draw, circle, thick, black!90}, 
		f/.style = {draw, thick, circle, accepting, black!90}, 
		dim/.style = {black!20},
		edge/.style = {-Latex, black!90}, 
		highlight/.style = {graph1, thick},
		initial distance=0.5cm, every initial by arrow/.style={-Latex,thick, black!90}, initial text={}
		]
		\node (init) [v,initial above] {};
		
		\node (l1) [v, below left = of init] {};
		\node (l2) [v, below = of l1] {};
		\node (l3) [v, dim, below = 1.4 of l2] {};
		\node (l4) [v, dim, below = of l3] {};
		
		\node (r1) [f, very thick, highlight, below right = of init] {};
		\node (r2) [v, dim, below = of r1] {};
		\node (r3) [v, dim, below = 1.4 of r2] {};
		\node (r4) [v, dim, below = of r3] {};
		
		\node (c1) [v] at ($(init) + (-0.6,-2)$) {};
		\node (c2) [v, below = of c1] {};
		\node (c3) [f, below = of c2] {};
		
		\draw [edge] (init) -> (l1);
		\draw [edge] (l1) -> (l2);
		\draw [edge, dim] (l3) -> (l4);
		
		\draw [edge, highlight] (init) -> (r1);
		\draw [edge, dim] (r1) -> (r2);
		\draw [dotted, thick, dim] (r2) -> (r3);
		\draw [edge, dim] (r3) -> (r4);
		\node (x2) [f, v, thick, graph1] at ($(l2) + (0, -0.8)$) {};
		\draw [edge, highlight] (l2) -> (x2);
		
		\draw [dotted, dim] (x2) -> (l3);
		
		\draw [edge, rounded corners] (init) -- ($(init) + (-0.6,-0.5)$) -- (c1);
		\draw [edge] (c1) -> (c2);
		\draw [edge] (c2) -> (c3);
		\draw [edge,rounded corners] (c3) -- ($(c3) + (1.2,0.5)$) -- ($(init) + (0.6,-0.5)$) -- (init) ;
		\path  (x2) edge [in = 215, out=145, looseness =7, highlight] node {} (x2);
		\path  (r1) edge [in = -35, out=35, looseness =7, highlight] node {} (r1);
		\draw [edge,rounded corners,dim] (l4) -- ($(l4) + (0.5,0)$) -- (c3) ;
		\end{tikzpicture}
		\end{adjustbox}
		\label{fig:upperbound}
	}%
	\caption{Fig. \ref{fig:relaventsubset} depicts the relevant subset obtained at line \ref{alg:main:rel} of Algorithm \ref{alg:main}. Fig.~\ref{fig:lowerbound} and Fig.~\ref{fig:upperbound} show the addition of successors (in highlight) of the states at the fringe. In Fig.~\ref{fig:lowerbound}, the appended states are made absorbing by adding a self-loop of rate $\exitmax$. Meanwhile in Fig.~\ref{fig:upperbound}, the newly added states are made goals.}
\end{figure}
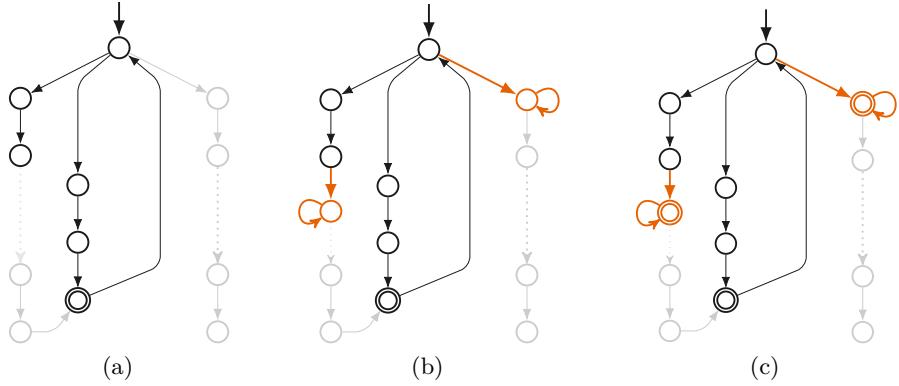

\begin{example}
Let $S'$ be the state space of the CTMDP from Figure \ref{fig:running-a} explored in Example \ref{epl:running}. Figure \ref{fig:relaventsubset} depicts the sub-CTMDP obtained by restricting the state space of the original model to $S'$. Figures \ref{fig:lowerbound} and \ref{fig:upperbound} demonstrate how the ``pessimistic'' and ``optimistic'' CTMDPs can be obtained. All the states that are not part of $S'$ are made absorbing for the ``pessimistic'' CTMDP (\ref{fig:lowerbound}) and are made goal states for the ``optimistic'' CTMDP (\ref{fig:upperbound}).
\end{example}

Formally, we define methods $\lowerSub{\Ctmdp}{S'}$ and $\upperSub{\Ctmdp}{S'}$ that return the pessimistic and optimistic CTMDP, respectively. The $\lowerSub{\Ctmdp}{S'}$ method returns a CTMDP $\underline{\Ctmdp} = (\initstate, \widetilde{S}, \Act, \widetilde{\bfR}, \goals)$, where $\widetilde{S} = S' \cup \Succ(S')$, and $\forall s', s'' \in \widetilde{S}$:
$$
\widetilde{\bfR}[s',\alpha, s''] = 
\left\lbrace
\begin{array}{llc}
\bfR[s',\alpha, s''] & \mbox{ if } s' \in S'\\
\exitmax & \mbox{ if } s' \not \in S', s'' = s' \\
0 & \mbox{ otherwise,}
\end{array}
\right.
$$ 
where $\exitmax$ is the maximum exit rate in $\Ctmdp$.
And the method $\upperSub{\Ctmdp}{S'}$ returns CTMDP $\overline{\Ctmdp} = (\initstate, \widetilde{S}, \Act, \widetilde{\bfR}, \overline{\goals})$, where $\overline{\goals} = \goals \cup (\widetilde{S} \setminus S')$, and state space $\widetilde{S}$ and the rate matrix $\widetilde{\bfR}$ are the same as for $\lowerSub{\Ctmdp}{S'}$.

Since many states are absorbing now large parts of the state space may become unreachable, namely all the states that are not in $\widetilde{S}$.

\begin{restatable}{lemma}{lowerUpperSub}
	\label{lem:lower-upper-sub}
	Let $\lowerSub{\Ctmdp}{S'} = \underline{\Ctmdp}$ and $\upperSub{\Ctmdp}{S'} = \overline{\Ctmdp}$, then
	$$\val{\underline{\Ctmdp}}{}{T} \leqslant \val{\Ctmdp}{}{T} \leqslant \val{\overline{\Ctmdp}}{}{T} $$
\end{restatable}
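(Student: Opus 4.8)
The plan is to exploit that the three CTMDPs $\underline{\Ctmdp}$, $\Ctmdp$ and $\overline{\Ctmdp}$ are indistinguishable as long as a run stays inside $S'$: by construction $\widetilde{\bfR}$ agrees with $\bfR$ on every state of $S'$, every one-step successor of an $S'$-state already lies in $\widetilde{S}=S'\cup\Succ(S')$, and $\initstate\in S'$. I would therefore compare the models run-by-run only up to the first instant a run leaves $S'$, and show that this truncated behaviour already forces both inequalities.

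First I would fix an arbitrary scheduler $\straa\in\straas$ of $\Ctmdp$ and let $\underline{\straa}$ (resp.\ $\overline{\straa}$) be the scheduler of $\underline{\Ctmdp}$ (resp.\ $\overline{\Ctmdp}$) that copies $\straa$ on every path prefix that stays in $S'$ and acts arbitrarily elsewhere; this is unambiguous since $\widetilde{S}\setminus S'$ is absorbing in both sub-CTMDPs (and even a goal in $\overline{\Ctmdp}$), so such states are never re-entered. Writing $\tau$ for the first time a run is in a state of $\widetilde{S}\setminus S'$ (with $\tau=\infty$ otherwise), I would argue that the law of a run stopped at $\tau$ is the same under $\bfP[\Ctmdp]{\straa}$, $\bfP[\underline{\Ctmdp}]{\underline{\straa}}$ and $\bfP[\overline{\Ctmdp}]{\overline{\straa}}$: before $\tau$ the exit rates $\exit{s,\alpha}$, the action distributions and the one-step distributions $\trans(s,\alpha,\cdot)$ coincide, and the remaining differences between the models (the self-loop rate $\exitmax$ at fringe states, which is irrelevant anyway; the absence of states outside $\widetilde{S}$; the goal set) cannot be observed before $\tau$. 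From this I would deduce, for every $\straa$,
$$
\bfP[\underline{\Ctmdp}]{\underline{\straa}}\left[\treach{T}\goals\right]\ \leqslant\ \bfP[\Ctmdp]{\straa}\left[\treach{T}\goals\right]\ \leqslant\ \bfP[\overline{\Ctmdp}]{\overline{\straa}}\left[\treach{T}\overline{\goals}\right].
$$
For the right inequality: if a $\Ctmdp$-run reaches $\goals$ within $T$, then either it does so no later than $\tau$, in which case the same prefix is a run of $\overline{\Ctmdp}$ reaching $\goals\subseteq\overline{\goals}$ within $T$; or it leaves $S'$ strictly earlier, in which case the prefix up to $\tau$ is a run of $\overline{\Ctmdp}$ hitting $\widetilde{S}\setminus S'\subseteq\overline{\goals}$ at time $\tau$, which lies below the original reaching time and hence below $T$. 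Since truncated prefixes carry equal probability in the two models, the inequality follows. For the left inequality: in $\underline{\Ctmdp}$ a state of $\widetilde{S}\setminus S'$ is absorbing and, unless it is already an original goal, non-goal, so any $\underline{\Ctmdp}$-run reaching $\goals$ within $T$ must do so no later than $\tau$; its prefix up to that point stays in $S'$ apart from at most its final state, which is a genuine one-step successor, so it is a run of $\Ctmdp$ of the same probability reaching $\goals$ within $T$.

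Next I would lift this to the optima. The quantity $\bfP[\underline{\Ctmdp}]{\zeta}\left[\treach{T}\goals\right]$ depends only on the behaviour of $\zeta$ on prefixes inside $S'$ (runs that leave $S'$ contribute nothing), and every such behaviour is the $S'$-restriction of some scheduler of $\Ctmdp$; likewise for $\overline{\Ctmdp}$, where leaving $S'$ already means having reached a goal. Hence
$$
\val{\underline{\Ctmdp}}{}{T}=\opt\limits_{\straa\in\straas}\bfP[\underline{\Ctmdp}]{\underline{\straa}}\left[\treach{T}\goals\right],\qquad \val{\overline{\Ctmdp}}{}{T}=\opt\limits_{\straa\in\straas}\bfP[\overline{\Ctmdp}]{\overline{\straa}}\left[\treach{T}\overline{\goals}\right].
$$
Combining the displayed pointwise inequalities with the fact that $\opt\in\{\sup,\inf\}$ is monotone under pointwise comparison of functions then gives $\val{\underline{\Ctmdp}}{}{T}\leqslant\val{\Ctmdp}{}{T}\leqslant\val{\overline{\Ctmdp}}{}{T}$; the degenerate case $\initstate\in\goals$, already handled by lines~1--2 of Algorithm~\ref{alg:main}, makes all three values equal $1$.

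The hardest part will be making precise that $\bfP[\Ctmdp]{\straa}$, $\bfP[\underline{\Ctmdp}]{\underline{\straa}}$ and $\bfP[\overline{\Ctmdp}]{\overline{\straa}}$ agree on runs truncated at $\tau$, which has to be carried out over the standard $\sigma$-algebra on timed paths via a stopping-time / cylinder-set argument, together with the careful check that truncation at $\tau$ really maps $\treach{T}\goals$-runs of $\Ctmdp$ into $\treach{T}\overline{\goals}$-runs of $\overline{\Ctmdp}$ and that no $\treach{T}\goals$-run of $\underline{\Ctmdp}$ survives past $\tau$ — this is where $\widetilde{S}\setminus S'\subseteq\overline{\goals}$ and the absorbing construction do the real work. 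Everything else is bookkeeping.
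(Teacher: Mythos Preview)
Your argument is correct and follows a genuinely different route from the paper. The paper's proof is purely analytic: it writes the Bellman-type integral equation
\[
\val{}{s}{T}=\max_{\alpha}\int_{0}^{T}\exit{s,\alpha}e^{-\exit{s,\alpha}t}\sum_{s'}\trans(s,\alpha,s')\,\val{}{s'}{T-t}\,\mathrm{d}t,
\]
observes that this operator is monotone in the successor-value functions, and then notes that the two constructions simply replace the value at every fringe state $s'\in\widetilde{S}\setminus S'$ by the constant $0$ (lower) or $1$ (upper); since $0\leqslant\val{}{s'}{t}\leqslant 1$ for all $t$, monotonicity yields both inequalities in one line. No schedulers, no path measures, no stopping times.

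Your approach instead couples the three processes pathwise up to the first exit from $S'$ and compares the reachability events directly. This is more operational and arguably more transparent, and it does not require appealing to the integral characterisation of $\val{}{}{T}$; it also makes explicit why the result holds uniformly for $\opt\in\{\sup,\inf\}$, which the paper's proof leaves implicit in the monotonicity step. The price you pay is exactly what you flag as the hardest part: making the equality of stopped laws precise on the timed-path $\sigma$-algebra. The paper sidesteps that entirely by working at the level of value functions, which is why its proof fits in a few lines, whereas yours, done rigorously, would be longer.
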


\ssecspace

\subsection{Step 3: Computing the Reachability Value}

\ssecspace

Algorithm \ref{alg:main} requires computing the reachability values for CTMDPs $\underline{\Ctmdp}$ and $\overline{\Ctmdp}$ (line \ref{alg:main:lower-upper-sch}). 
This can be done by any algorithm for reachability analysis, e.\,g. \cite{DBLP:conf/atva/ButkovaHHK15,DBLP:conf/qest/NeuhausserZ10,DBLP:conf/fsen/HatefiH13,DBLP:journals/cor/BuchholzS11,fearnley_et_al:LIPIcs:2011:3354,DBLP:conf/tacas/BaierHHK04} which approximate the value up to an arbitrary precision $\varepsilon$. 
These algorithms usually also compute the $\varepsilon$-optimal scheduler along with the approximation of the reachability value.
In the following we will use interchangeably the notions of the value and its $\varepsilon$-approximation, as well as an optimal scheduler and an $\varepsilon$-optimal scheduler.

Notice that some of the algorithms mentioned above compute optimal reachability value only w.\,r.\,t. a subclass of schedulers, rather than the full class $\straas$. In this case the result of Algorithm \ref{alg:main} will be the optimal reachability value with respect to this subclass and not class $\straas$.
\ssecspace

\subsection{Step 4: The Choice of Scheduler $\straasim$}\label{sec:sol:sched}
\ssecspace

At line \ref{alg:main:straasim} of Algorithm \ref{alg:main} the scheduler $\straasim$ is selected that is used in the subsequent iteration for refining the relevant subset of states.
We propose two ways of instantiating the function $\ChooseScheduler(\straauni,\straaopt)$, one with the uniform scheduler $\straauni$, and another with the scheduler $\straaopt$. Depending on the model, its goal states and the time bound one of the options may deliver smaller relevant subset than another:

\begin{example}

\begin{figure}[t]
	\centering
	\scalebox{0.75}{
		\centering
	\subfloat[]{
		\begin{tikzpicture}[node distance = {0.5cm and 1.2cm}, 
		v/.style = {draw, circle, thick, minimum size=0.7cm}, 
		relv/.style = {circle, thick}, 
		none/.style = {draw=none,fill=none,inner sep=0,outer sep=0}, 
		f/.style = {draw, thick, circle, accepting}, 
		edge/.style = {-Latex, thick}, 
		rel/.style = {}, 
		ris/.style = {}, 
		ev/.style = {font=\scriptsize}, 
		initial distance=0.5cm, every initial by arrow/.style={-Latex,thick}, initial text={}
		]
		\node (init) [v, initial above] {0};
		
		\node (a1) [v] at ($(init) + (-2,0)$) {$a_1$};
		\node (a2) [v, below = of a1] {$a_2$};
		\node (b1) [v, below = of a2] {$b_1$};
		\node (b2) [none, inner sep=7, below = of b1] {$\cdots$};
		\node (b11) [v, below = of b2] {$b_{11}$};

		\node (c1) [v] at ($(init) + (2,0)$) {$c_1$};
		\node (c2) [v, below = of c1] {$c_2$};
		\node (c3) [v, below = of c2] {$c_3$};
		\node (rc3) [none] at ($(c3) + (0,-2.4)$) {};

		\node (g) [f]  at ($(init) + (0,-4.85)$)  {g};
		
		\node (xa) [none] at ($(init)!0.5!(a1)$) {};
		\node (xb) [none] at ($(init)!0.5!(c1)$) {};

		\path 
			(init)	edge [midway, above, -, dashed, rel] node[] {$\alpha$} (xa)
					edge [midway, above, -, dashed, ris] node {$\beta$} (xb)	
			(xb)	edge [midway, above] node[ev] {$0.5$} (c1)
			(a1)	edge [midway, left] node[ev] {$0.51$} (a2)
			(a2)	edge [midway, left] node[ev] {$0.51$} (b1)				(b1)	edge [midway, left] node[ev] {$175$} (b2)
			(b2)	edge [midway, left] node[ev] {$175$} (b11)
			(b11)	edge [midway, above] node[ev] {$175$} (g)
			(xa)	edge [midway, above] node[ev] {$0.51$} (a1)
			(c1)	edge [midway, left] node[ev] {$0.5$} (c2)
			(c2)	edge [midway, left] node[ev] {$0.5$} (c3)
			(g)	edge [midway, left, loop above] node[] {$1$} (g)
			
	;
	\draw [edge, rounded corners, ris] (c3) node[ev,left]at($(c3)!0.5!(rc3)$){$0.5$} -- ($(rc3)$) -- (g);
		\end{tikzpicture}
		\label{fig:uni-better:a}
	}

\hspace*{2cm}

	\subfloat[]{
		\begin{tikzpicture}[node distance = {0.5cm and 1.2cm}, 
		v/.style = {draw, circle, thick, minimum size=0.7cm}, 
		relv/.style = {circle, thick}, 
		none/.style = {draw=none,fill=none,inner sep=0,outer sep=0}, 
		f/.style = {draw, thick, circle, accepting}, 
		edge/.style = {-Latex, thick}, 
		rel/.style = {}, 
		ris/.style = {}, 
		ev/.style = {font=\scriptsize}, 
		initial distance=0.5cm, every initial by arrow/.style={-Latex,thick}, initial text={}
		]
		\node (init) [v, initial above] {0};
		
		\node (a1) [v] at ($(init) + (-2,0)$) {$d_1$};
		\node (a2) [v, below = of a1] {$d_2$};
		\node (b1) [v, below = of a2] {$e_1$};
		\node (b2) [none, inner sep=7, below = of b1] {$\cdots$};
		\node (b11) [v, below = of b2] {$e_{11}$};

		\node (c1) [v] at ($(init) + (2,0)$) {$f_1$};
		\node (c2) [v, below = of c1] {$f_2$};
		\node (c3) [v, below = of c2] {$f_3$};
		\node (c4) [none, below = of c3] {$\cdots$};
		\node (c13)[v] at ($(c3) + (0,-2.4)$) {$f_{13}$};

		\node (g) [f]  at ($(init) + (0,-4.85)$)  {g};
		
		\node (xa) [none] at ($(init)!0.5!(a1)$) {};
		\node (xb) [none] at ($(init)!0.5!(c1)$) {};

		\path 
			(init)	edge [midway, above, -, dashed, rel] node[] {$\alpha$} (xa)
					edge [midway, above, -, dashed, ris] node {$\beta$} (xb)	
			(xb)	edge [midway, above] node[ev] {$0.5$} (c1)
			(a1)	edge [midway, left] node[ev] {$0.51$} (a2)
			(a2)	edge [midway, left] node[ev] {$0.51$} (b1)				(b1)	edge [midway, left] node[ev] {$175$} (b2)
			(b2)	edge [midway, left] node[ev] {$175$} (b11)
			(b11)	edge [midway, above] node[ev] {$175$} (g)
			(xa)	edge [midway, above] node[ev] {$0.51$} (a1)
			(c1)	edge [midway, left] node[ev] {$0.5$} (c2)
			(c2)	edge [midway, left] node[ev] {$0.5$} (c3)
			(c3)	edge [midway, left] node[ev] {$0.5$} (c4)
			(c4)	edge [midway, left] node[ev] {$0.5$} (c13)
			(c13)	edge [midway, above] node[ev] {$0.5$} (g)
			(g)	edge [midway, left, loop above] node[] {$1$} (g)
			
	;
		\end{tikzpicture}
		\label{fig:uni-better:b}
	}
	}
		\caption{}
\end{figure}

Consider, the CTMDP in Figure \ref{fig:uni-better:a} and the time bound $3.0$.
Assuming that the goal state has not yet been sampled from the right and left chains, action $\alpha$ delivers higher reachability value than action $\beta$.
For example, if states $a_1$ to $a_2$ are sampled from the chain on the left and $c_1$ to $c_2$ from the chain on the right,
the reachability value of the respective over-approximating CTMDP when choosing action $\beta$ is $0.1987$ and when choosing action $\alpha$ is $0.1911$.
And this situation persists also when states $b_1-b_{10}$ are sampled due to high exit rates of the respective transitions.
However if state $b_{11}$ is sampled, the reachability value when following $\alpha$ becomes $0.1906$.
Only at this moment the optimal behaviour is to choose action $\beta$.
However, when following the uniform scheduler, there is a chance that the whole chain on the right is explored before any of the states $b_i$ are visited.
If the precision $\varepsilon=0.01$, then at the moment the goal state is reached via the right chain and at least states $a_1$ to $a_2$ are sampled on the left, the algorithm has converged.
Thus using the uniform scheduler $\Algo$ may in fact explore fewer states than when using the optimal one.

Naturally, there are situation when following the optimal scheduler is the best one can do.
For example, in the CTMDP in Figure \ref{fig:uni-better:b} it is enough to explore only state $f_1$ on the right to realise that action $\beta$ is sub-optimal.
From this moment on only action $\alpha$ is chosen for simulations, which is in fact the best way to proceed.
At the moment the goal state is reached the algorithm has converged for precision $0.01$.

\end{example}

One of the main advantages of the uniform scheduler is that it does not require too much memory and is simple to implement.
Moreover, since some algorithms to compute time-bounded reachability probability do not provide an optimal scheduler in the classical way as defined in Section \ref{sec:ctmcp} (\cite{DBLP:conf/atva/ButkovaHHK15}), the use of $\straauni$ may be the only option.
In spite of its simplicity, in many cases this scheduler generates very succinct state spaces, as we will show in Section \ref{sec:experiments}. 

Using the uniform scheduler is beneficial in those cases when, for example, different actions of the same state have exit rates that differ drastically, e.\,g. by an order of magnitude. If the goal state is reachable via actions with high rates, choosing an action with low rate leads to higher residence times (due to properties of the exponential distribution) and therefore fewer states will be reachable within the time bound, compared to choosing an action with a high exit rate. In this case using the uniform scheduler may lead to larger sub-space, compared to using the optimal scheduler. However, the experiments show this difference is typically negligible.

The drawback of the uniform scheduler is that the probability of it choosing each action is positive. 
Thus it will choose also those actions that are clearly suboptimal and could be omitted during the simulations.
The uniform scheduler $\straauni$ does not take this information into account while the scheduler $\straaopt$ does. The latter is optimal on the sub-CTMDP obtained during the previous iterations. This scheduler will thus pick only those actions that look most promising to be optimal. Using this scheduler may induce smaller sampled state space than the one generated by $\straauni$, as we also show in Section \ref{sec:experiments}.

Notice that it is possible to alternate between using $\straauni$ and $\straaopt$ at different iterations of Algorithm \ref{alg:main}, for instance, when $\straaopt$ is costly to obtain or simulate.
However, in our experiments, we always choose either one of the two, with the exception for the first iteration when only the uniform scheduler is available.

\subsection{Step 5: Termination and Optimal Schedulers} 
\ssecspace

The algorithm runs as long as the reachability values of $\underline{\Ctmdp}$ and $\overline{\Ctmdp}$, as computed in Step 3 are not sufficiently close. It terminates when the difference becomes less than $\varepsilon$. The scheduler $\lstraaopt$ obtained in line \ref{alg:main:lower-upper-sch} of Algorithm \ref{alg:main} is $\varepsilon$-optimal for $\underline{\Ctmdp}$ since it is obtained by running a standard TBR algorithm on $\underline{\Ctmdp}$.
From this scheduler one can obtain $\varepsilon$-optimal scheduler $\straa$ for $\Ctmdp$ itself by choosing the same actions as $\lstraaopt$ on the relevant subset of states ($S'$ in Algorithm \ref{alg:main}) and any arbitrary action on states that are not relevant.
By using this extended scheduler on $\Ctmdp$, a value of $\val{\underline{\Ctmdp}}{}{T}$ can be achieved at the least. On the other hand, the scheduler $\straaopt$ is $\varepsilon$-optimal for $\overline{\Ctmdp}$. This naturally provides an upper bound on the reachability value which can be obtained in $\Ctmdp$.
Therefore the value of $\straa$ lies within $\left[\val{\underline{\Ctmdp}}{}{T}, \val{\overline{\Ctmdp}}{}{T}\right]$, which makes $\straa$ $\varepsilon$-optimal for $\Ctmdp$.

\begin{restatable}{lemma}{optScheduler}
  \label{lem:opt-sch}
  Scheduler $\straa$ computed by Algorithm \ref{alg:main} is $\varepsilon$-optimal.
\end{restatable}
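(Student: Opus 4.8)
The plan is to show the two inequalities $\val{\underline{\Ctmdp}}{}{T} \leqslant \val{\Ctmdp}{}{T}$ and $\val{\Ctmdp}{}{T} \leqslant \val{\overline{\Ctmdp}}{}{T}$ bounding the value of $\straa$, using Lemma \ref{lem:lower-upper-sub} for the chain and then arguing that the extension of $\lstraaopt$ witnesses the lower bound. Since Lemma \ref{lem:lower-upper-sub} already gives
$\val{\underline{\Ctmdp}}{}{T} \leqslant \val{\Ctmdp}{}{T} \leqslant \val{\overline{\Ctmdp}}{}{T}$,
and the loop terminates only once $u - \ell = \val{\overline{\Ctmdp}}{}{T} - \val{\underline{\Ctmdp}}{}{T} < \varepsilon$, it suffices to show that the value achieved by $\straa$ on $\Ctmdp$ lies in the interval $[\val{\underline{\Ctmdp}}{}{T}, \val{\overline{\Ctmdp}}{}{T}]$; any point in that interval is then within $\varepsilon$ of $\val{\Ctmdp}{}{T}$, which is exactly $\varepsilon$-optimality.

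First I would address the upper bound: for \emph{any} scheduler on $\Ctmdp$, in particular for $\straa$, its reachability probability is at most $\val{\Ctmdp}{}{T}$ by definition of $\opt$ when $\opt = \sup$ (and the $\inf$ case is handled symmetrically — here the relevant direction is that $\straa$ cannot exceed the supremum), hence at most $\val{\overline{\Ctmdp}}{}{T}$ by Lemma \ref{lem:lower-upper-sub}. Next I would address the lower bound, which is the substantive part: I claim $\bfP[\Ctmdp]{\straa}[\treach{T}\goals] \geqslant \val{\underline{\Ctmdp}}{}{T}$. The key observation is that $\underline{\Ctmdp} = \lowerSub{\Ctmdp}{S'}$ differs from $\Ctmdp$ only in that states outside $S'$ have been made absorbing non-goal states; on $S'$ itself the rate matrix, enabled actions, and successor distributions agree. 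Since $\straa$ restricted to $S'$ coincides with $\lstraaopt$ (the $\varepsilon$-optimal scheduler for $\underline{\Ctmdp}$), and since every path that reaches $\goals$ within time $T$ in $\underline{\Ctmdp}$ must do so while staying inside $S'$ (it gets stuck forever in a non-goal sink the moment it leaves $S'$), every such $\goals$-reaching path of $\underline{\Ctmdp}$ under $\lstraaopt$ is also a $\goals$-reaching path of $\Ctmdp$ under $\straa$ with the same probability density. Formally I would set up a measure-preserving injection between the corresponding cones of time-bounded $\goals$-reaching paths, concluding $\bfP[\Ctmdp]{\straa}[\treach{T}\goals] \geqslant \bfP[\underline{\Ctmdp}]{\lstraaopt}[\treach{T}\goals] = \val{\underline{\Ctmdp}}{}{T}$.

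Combining the two bounds, the value of $\straa$ on $\Ctmdp$ lies in $[\val{\underline{\Ctmdp}}{}{T}, \val{\overline{\Ctmdp}}{}{T}]$, an interval of width less than $\varepsilon$ that also contains $\val{\Ctmdp}{}{T}$; hence $\straa$ is $\varepsilon$-optimal. The main obstacle I anticipate is the careful bookkeeping in the path-space argument for the lower bound — specifically, making precise that restricting attention to paths confined to $S'$ does not lose any $\goals$-reaching mass in $\underline{\Ctmdp}$, and that the residence-time densities and branching probabilities genuinely match between the two models along such paths, so that the claimed inequality $\bfP[\Ctmdp]{\straa}[\treach{T}\goals] \geqslant \bfP[\underline{\Ctmdp}]{\lstraaopt}[\treach{T}\goals]$ is rigorous rather than merely intuitive. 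A secondary subtlety is that Step 3 returns only an $\varepsilon$-optimal scheduler/value rather than the exact optimum, so one should either absorb this into the stated precision or, as the paper does, treat "value" and "$\varepsilon$-approximation" interchangeably and note the constants compose.
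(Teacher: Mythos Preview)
Your proposal is correct and follows the same overall decomposition as the paper: sandwich the value achieved by $\straa$ between $\val{\underline{\Ctmdp}}{}{T}$ and $\val{\overline{\Ctmdp}}{}{T}$, invoke Lemma~\ref{lem:lower-upper-sub} for the upper direction, and argue the lower direction from the construction of $\straa$ as an extension of $\lstraaopt$. The difference lies in how the lower bound $\val{\underline{\Ctmdp}}{}{T} \leqslant \bfP[\Ctmdp]{\straa}[\treach{T}\goals]$ is established. You propose a path-space argument: a measure-preserving injection from $\goals$-reaching paths of $\underline{\Ctmdp}$ under $\lstraaopt$ into those of $\Ctmdp$ under $\straa$, exploiting that the two models agree on $S'$. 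The paper instead works analytically: it introduces step-bounded values $\val{\Ctmdp}{s,\sigma}{T,N}$, proves $\val{\underline{\Ctmdp}}{s,\lstraaopt}{T,N} \leqslant \val{\Ctmdp}{s,\straa}{T,N}$ for all $s$ and $T$ by induction on $N$ via the integral Bellman representation (splitting the successor sum over $\widetilde{S}$ and its complement), and then passes to the limit $N\to\infty$. Your route is conceptually more direct but, as you anticipate, demands careful measure-theoretic bookkeeping on cylinder sets; the paper's induction sidesteps path-space measures entirely at the cost of an explicit step-count argument. One small imprecision to fix: a path in $\underline{\Ctmdp}$ \emph{can} leave $S'$ on its final step and still reach $\goals$, since fringe states in $\widetilde{S}\setminus S'$ keep their goal status in $\underline{\Ctmdp}$ (only their outgoing transitions are replaced by self-loops); your injection still works, but the parenthetical ``non-goal sink'' should be relaxed accordingly.
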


\begin{restatable}{theorem}{correctness}
  \label{thm:correctness}
  Algorithm \ref{alg:main} converges almost surely.
\end{restatable}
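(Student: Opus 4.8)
The plan is to show that with probability one the gap $u - \ell = \val{\overline{\Ctmdp}}{}{T} - \val{\underline{\Ctmdp}}{}{T}$ eventually drops below $\varepsilon$, so that the \textbf{while}-loop of Algorithm~\ref{alg:main} terminates. The key observation is that the relevant subset $S'$ is monotonically non-decreasing across iterations (line~\ref{alg:main:rel} only adds states), and $S$ is finite, so after finitely many ``productive'' iterations we have $S' = S$. Once $S' = S$, the construction of Step~2 gives $\widetilde S = S' \cup \Succ(S') = S$, the rate matrices $\widetilde{\bfR}$ coincide with $\bfR$ on all states, and no state is turned absorbing or made an extra goal; hence $\underline{\Ctmdp} = \overline{\Ctmdp} = \Ctmdp$ (up to the irrelevant absorbing behaviour after reaching a goal), so $\val{\underline{\Ctmdp}}{}{T} = \val{\overline{\Ctmdp}}{}{T}$ by Lemma~\ref{lem:lower-upper-sub}, the gap is $0 < \varepsilon$, and the loop exits. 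So it suffices to prove that almost surely only finitely many iterations fail to reach the fixed point $S' = S$, i.e.\ that as long as $S' \subsetneq S$ there is, with probability one, a future iteration that strictly enlarges $S'$.

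For that step I would argue as follows. Suppose at the start of some iteration $S' \subsetneq S$. I claim there is a state $s \in S \setminus S'$ that is reachable from $\initstate$ within time $T$ under \emph{some} scheduler via a path staying outside $G$ until $s$; indeed, if no such state existed, then the pessimistic and optimistic modifications would already agree on the value and the algorithm would have converged — but even if it has not converged, what we really need is simply that the simulation in Algorithm~\ref{alg:get-relevant} has positive probability of visiting a new state. Here the crucial fact is that $\straasim$ is always either $\straauni$ or $\straaopt$, and in both cases every enabled action has positive probability; moreover from any state, every successor in $\Succ(s,\alpha)$ has positive transition probability $\trans(s,\alpha,s') > 0$, and the probability that the exponentially distributed residence times along any fixed finite prefix sum to less than $T$ is strictly positive. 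Consequently, for any fixed finite path $\rho$ from $\initstate$ that respects $\Act$ and has total (expected) structure fitting within the time bound, a single simulation run of Algorithm~\ref{alg:get-relevant} follows a prefix of $\rho$ with probability bounded below by a constant $p_\rho > 0$ that depends only on $\Ctmdp$ and $T$, not on the iteration.

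The heart of the argument is then a Borel–Cantelli / geometric-trials style claim: there is a uniform $\delta > 0$ such that, whenever $S' \subsetneq S$ at the start of an iteration, the probability that this iteration adds at least one new state to $S'$ is at least $\delta$. Take any minimal path from $\initstate$ to some boundary state $s \in \Succ(S') \setminus S'$ (such a state exists since the CTMDP is finite, all $\Act(s), \Succ(s,\alpha)$ are nonempty, and $S' \ni \initstate$ is a proper subset, so the frontier is non-empty); its length is at most $\abs{S}$, so the probability a single simulation traverses it and hence discovers $s$ is at least some $\delta_0 > 0$ depending only on $\pmin := \min\{\trans(s,\alpha,s') : \trans(s,\alpha,s')>0\}$, on $\abs{\Act}$, on $\abs{S}$, on $\exitmax$ and on $T$. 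With $\nSim \geq 1$ independent simulations per call, the per-iteration success probability is at least $\delta := \delta_0 > 0$. Since successive iterations use independent randomness, the events ``iteration $k$ fails to grow $S'$'' have probability at most $1-\delta$ each, so the probability that $S'$ never grows past some proper subset is $0$; equivalently, with probability one $S'$ reaches $S$ after finitely many iterations, at which point (as shown above) the loop terminates. The main obstacle — and the place that needs care — is making the lower bound $\delta$ genuinely \emph{uniform over iterations}: one must choose the witness path relative to the \emph{current} $S'$ yet bound $p_\rho$ by a quantity ($\pmin$, $\abs S$, $\exitmax$, $T$) independent of $S'$, and one must confirm that both admissible choices of $\straasim$ (uniform and $\straaopt$) assign positive probability to every action along that path — for $\straaopt$ this uses that it is a genuine scheduler in $\straas$, i.e.\ $\straaopt(\rho,t) \in \Act(\last{\rho})$, though one should note $\straaopt$ might deterministically avoid the needed action, so the clean statement is obtained by falling back on the uniform scheduler being used in the first iteration and, more robustly, by the design choice in \ChooseScheduler{} that keeps $\straauni$ available; I would phrase the theorem's hypothesis accordingly or invoke that at least one of the two choices always explores with full support.
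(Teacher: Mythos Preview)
Your argument for the case $\straasim = \straauni$ is essentially the paper's, fleshed out with the Borel--Cantelli / geometric-trials reasoning the paper leaves implicit: the paper simply asserts that uniform sampling ``would eventually cover all states reachable from the initial state,'' whence $\underline{\Ctmdp}=\overline{\Ctmdp}=\Ctmdp$ and the loop terminates.

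The gap is exactly the one you already flag: the ``eventually $S'=S$'' strategy does not go through for $\straasim=\straaopt$, because $\straaopt$ may be deterministic and need not give positive mass to every enabled action, so your uniform per-iteration discovery bound $\delta>0$ is unavailable. Your proposed escapes --- relying on the single first $\straauni$-iteration, or on $\ChooseScheduler$ ``keeping $\straauni$ available,'' or re-phrasing the hypothesis --- do not prove the theorem as stated, since the algorithm explicitly permits choosing $\straaopt$ in every iteration after the first. The paper handles this case by a different route altogether: it observes that time-bounded CTMDP reachability discretises (via~\cite{DBLP:conf/qest/NeuhausserZ10}) to step-bounded MDP reachability, and that Steps~\ref{alg:main:lower-upper}--\ref{alg:main:straasim} of Algorithm~\ref{alg:main} are the continuous-time analogue of the BRTDP-style exploration/backup loop of~\cite{atva}; almost-sure convergence is then inherited from the convergence theorem of~\cite{atva} by letting the discretisation error tend to zero. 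In particular, for $\straaopt$ one does \emph{not} argue that $S'$ eventually exhausts $S$; one argues that the gap $u-\ell$ shrinks because $\straaopt$ is continually driven toward the current fringe (fringe states being goals in $\overline{\Ctmdp}$), which is precisely the mechanism whose convergence~\cite{atva} establishes in the discrete setting.
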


On any CTMDP, if $\straasim = \straauni$, Algorithm \ref{alg:main} will, in the worst case, eventually explore the whole CTMDP. In such a situation, $\underline{\Ctmdp}$ and $\overline{\Ctmdp}$ will be the same as $\Ctmdp$. The algorithm would then terminate since the condition on line \ref{alg:main:condn} would be falsified. If $\straasim = \straaopt$, the system is continuously driven to the fringe as long as the condition on line \ref{alg:main:condn} holds. This is because all unexplored states act as goal states in the upper-bound model. Such a scheduler will eventually explore the state-space reachable by the optimal scheduler on the original model and leave out those parts that are only reachable with suboptimal decisions.

\secspace

\section{Experiments}\label{sec:experiments}

\ssecspace

The framework described in Section \ref{sec:algo} was evaluated against 5 different benchmarks available in the MAPA language \cite{mapa}: 

\begin{description}
	\item[\textbf{Fault Tolerant Work Station Cluster}  (\ftwc-n)] \cite{DBLP:conf/srds/HaverkortHK00}: models two networks of $n$ workstations each. Each network is interconnected by a switch. The two switches communicate via a backbone. All the components may fail and can be repaired only one at a time. 
	The system starts in a fully functioning state and a state is goal if in both networks either all the workstations or the switch are broken.
	\item[\textbf{Google File System} (\gfs-n)] \cite{DBLP:conf/dsn/HaverkortCHK02,DBLP:conf/sosp/GhemawatGL03}: in this benchmark files are split into chunks of equal size, each chunk is maintained by one of $n$ chunk servers. We fix the number of chunks a server may store to 5000 and the total number of chunks to 10000. 
	The GFS starts in the broken state where no chunk is stored. A state is defined as goal if the system is back up and for each chunk at least 3 copies are available.
	\item[\textbf{Polling System} (\ps-j-k-g):] We consider the variation of the polling system case \cite{DBLP:conf/qest/GuckHHKT13} \cite{DBLP:conf/formats/TimmerPS13}, that consists of $j$ stations and one server. Incoming requests of $j$ types are buffered in queues of size $k$ each, until they are processed by the server and delivered to their station. The system starts in a state with all the queues being nearly full. We consider 2 goal conditions: (i) all the queues are empty (g=\all) and (ii) one of the queues is empty (g=\one).
	\item[\textbf{Erlang Stages} (\erlang-k-r):] this is a synthetic model with known characteristics \cite{DBLP:conf/tacas/ZhangN10}.
It has two different paths to reach the goal state: a fast but risky path or a slow but sure path. The slow path is an Erlang chain of length k and rate r.
	\item[\textbf{Stochastic Job Scheduling Problem} (\sjs-m-j)] \cite{DBLP:journals/jacm/BrunoDF81}: models a multiprocessor architecture running a sequence of independent jobs. It consists of $m$ identical processors and $j$ jobs. As goal we define the states with all jobs completed;
\end{description}

By setting different model parameters for each of these benchmarks, we were able to generate models ranging from hundreds to millions of states. We used the tool \scoop\, \cite{scoop} to instantiate and convert the MAPA models into explicit state space CTMDPs.  

Our algorithm is implemented as an extension to \prism\, \cite{KNP11} and we use \imca\, \cite{imca} in order to solve the sub-CTMDPs ($\underline{\Ctmdp}$ and $\overline{\Ctmdp}$).
We would like to remark, however, that the performance of our algorithm can be improved by using a better toolchain than our \prism-\imca\ setup (see Appendix \ref{sec:app:experiements}).


In order to instantiate our framework, we need to describe how we perform Steps 1 and 3 (Section \ref{sec:algo:solution}). Recall from Section \ref{sec:rel-subset} that we proposed two different schedulers to be used as the simulating scheduler $\straasim$
: the uniform scheduler $\straauni$ and the optimal scheduler $\straaopt$ obtained by solving $\overline\Ctmdp$.

For Step 3, we select three algorithms for time-bounded reachability analysis: the first discretisation-based algorithm \cite{DBLP:conf/qest/NeuhausserZ10} (\tbn),  and the two most competitive algorithms according to the comparison performed in \cite{DBLP:conf/atva/ButkovaHHK15}, namely the adaptive version of discretization \cite{DBLP:journals/cor/BuchholzS11} (\tbb) and the uniformisation-based \cite{DBLP:conf/atva/ButkovaHHK15} (\tbu). $\Algo$ instantiated with these algorithms and with $\straasim = \straauni$ is referred to with \itbn, \itbb\, and \itbu\, respectively. 
For $\straasim = \straaopt$, the instantiations are referred to as \itbnplus, \itbbplus\ and  \itbuplus. Since  \tbu ~does not provide the scheduler in a classical form as defined in Section \ref{sec:prelim}, we omit \itbuplus. 
We also omit experiments on \itbnplus\ as our experience with \tbn\ and \itbn\ suggested that \itbnplus\ would also run out of time on most experiments.
 
We compare the performance of the instantiated algorithms with their originals, implemented in \imca. We set the precision parameter for $\Algo$ and the original algorithms in \imca\, to $0.01$. Indicators such as the median model checking time (excluding the time taken to load the model into memory) and explored state-space are measured. More details about the experimental setup are available in the Appendix \ref{sec:app:experiements}.

\begin{table}[t]
	\centering
	\caption{An overview of the experimental results along with the state-space sizes. Runtime (in seconds) for the various algorithms are presented. For more details on the experimental setup, see Appendix \ref{sec:app:experiements}. `-' indicates a timeout (1800 secs). \itbu, \itbb\ and \itbbplus\ perform quite well on \erlang, \gfs\ and \ftwc\ while only \itbbplus\ is better than \tbu\ and \tbb\ on the \ps-\one\ family of models. \ps-4-8-\all\ and \sjs\ are hard instances for both $\straauni$ and $\straaopt$. \tbn\ times out on all benchmarks except on \sjs\ because of its small state-space.}
	\resizebox{0.96\columnwidth}{!}{%
	\makebox[\textwidth][c]{
	{\renewcommand{\arraystretch}{1.3}%
		\setlength{\tabcolsep}{8pt}
	\begin{tabular}{lr@{\hskip 24pt}rr@{\hskip 24pt}rrr@{\hskip 24pt}rr}
		\toprule
		Benchmark & States & \tbu & \itbu & \tbb & \itbb & \itbbplus & \tbn & \itbn \\
		\midrule
\erlang-$10^{6}$-10 & 1,000k & 71 & \textbf{1} & 4 & \textbf{1} & \textbf{1} & - & 299 \\
\gfs-120 & 1,479k & - & \textbf{2} & - & \textbf{2} & \textbf{2} & - & - \\
\ftwc-128 & 597k & 251 & \textbf{10} & 114 & \textbf{11} & \textbf{15} & - & - \\
\ps-4-24-\one & 7,562k & 507 & - & 171 & - & \textbf{105} & - & - \\
\ps-4-8-\all & 119k & 1,475 & - & 826 & - & - & - & - \\
\sjs-2-9 & 18k & 6 & 99 & 2 & 139 & - & 1,199 & - \\ 
		\bottomrule
		\vspace{-6pt}
	\end{tabular}
	}
	}
	}
	\label{tab:overall}
\end{table}

\begin{table}[t]
	\centering
	\caption{For each benchmark, we report 
		(i) the size of the state-space; 
		(ii) total states explored by our instantiations of $\Algo$ until convergence; 
		(iii) size of the final over-approximating sub-CTMDP $\overline{\Ctmdp}$; and 
		(iv) the number of states which need to be kept as returned by running the greedy search of Section \ref{sec:expr:greedy} for smallest sub-CTMDP. 
		We use \ps-4-4-\one\, and \sjs-2-7 instead of larger models in their respective families as running the greedy search is a highly computation-intensive task.}
	\makebox[\textwidth][c]{
	{\renewcommand{\arraystretch}{1.3}%
				\setlength{\tabcolsep}{12pt}
	\begin{tabular}{lrrrrrrrr}
		\toprule
		& & \multicolumn{2}{c}{Explored} & & \\
		\cmidrule{3-4}
		\thead{Benchmark} & \thead{States} & \thead{by $\straasim$} & \thead{\%} & \thead{Size of\\last $\overline\Ctmdp$} & \thead{Post greedy\\reduction} \\
		\midrule
\erlang-$10^{6}$-10 & 1,000k & 559 & 0.06 & 561 & 496 \\
\gfs-120 & 1,479k & 105 & 0.01 & 200 & 85 \\
\ftwc-128 & 597k & 296 & 0.05 & 858 & 253 \\
\sjs-2-7 & 2k & 2,537 & 93.86 & 2,704 & 1,543 \\
\ps-4-4-\one & 10k & 697 & 6.63 & 2,040 & 696 \\
\ps-4-8-\all & 119k & - & - & - & - \\
\ps-4-24-\one & 7,562k & 23,309 & 0.31 & - & - \\ 
		\bottomrule
		\vspace{-6pt}
	\end{tabular}
	}
	}
\label{tab:explored}
\end{table}

Tables \ref{tab:overall} and \ref{tab:explored} summarize the main results of our experiments. 
Table \ref{tab:overall} reports the running time of the algorithms on several benchmarks, 
while Table \ref{tab:explored} reports on 
the size of the state-space of the models, 
the states explored by $\Algo$, 
the size of the over-approximating sub-CTMDP $\overline{\Ctmdp}$ when the algorithm terminates and 
the smallest relevant subset of $\overline{\Ctmdp}$ that we can obtain with 
reasonable effort. This subset is computed by a greedy algorithm described in Section \ref{sec:expr:greedy}.
It attempts to reduce more states of the explored subset without sacrificing 
the precision too much. We run the greedy algorithm with a precision of 
$\varepsilon/10$, where $\varepsilon$ is the precision used in $\Algo$.

We recall that our framework is targeted towards models which contain a small subset of valuable states. We can categorize the models into three classes:

\begin{description}
	\item[Easy with Uniform Scheduler ($\straasim=\straauni$).] Surprisingly enough, the uniform scheduler performs well on many instances, for example \erlang, \gfs\, and \ftwc. 
	For \erlang ~and \gfs, it was sufficient to explore a few hundred states no matter how the parameter which increased the state-space was changed (see description of the models above).
	Here the running time of the instantiations of our framework outperformed the original algorithms due to the fact that less than 1\% of the state-space is sufficient to approximate the reachability value up to precision $0.01$.
	\item[Easy with Optimal Scheduler ($\straasim=\straaopt$).] 

Predictably, there are cases in which uniform scheduler does not provide good results. For example consider the case of \ps-4-24-\one. Here the goal condition requires that one of the queues be empty. 
An action in this benchmark determines the queue from which the task to be processed is picked. Choosing tasks uniformly from different queues, not surprisingly, leads to larger explored state spaces and longer runtimes. 
Notice that all the instantiations that use uniform scheduler run out of time on this instance. On the other hand, targeted exploration with the most promising scheduler 
(column \itbbplus) performs even better than the original algorithm \tbb, finishing within 105~s compared to 171~s and exploring only 0.31\% of the state space. 
	\item[Hard Instances.] Naturally there are instances where it is not possible to find a small sub-CTMDP that preserves the properties of interest.
	For example in \ps-4-8-\all, the system is started with all queues being nearly full and the property queried requires all of the queues in the polling system to be empty. 
	As discussed in the beginning of Section \ref{sec:algo}, most of the states of the model have to be explored in order to reach the goal state. 
	In this model there is simply no small sub-CTMDP that preserves the reachability probabilities. 
	As expected, all instantiations timed out and nearly all the states had to be explored. 
	The situation is similar with \sjs.
	We identified (using the greedy algorithm in Section \ref{sec:expr:greedy}) that on some small instances of this model, only 30\% to 40\% of the state-space can be sacrificed.
	\item[Explored State Space and Running Time.] In general, as we have mentioned in Section \ref{sec:algo}, the problem is heavily dependent not only on the structure of the model,
	but also on the specified time-bound and the goal set. Increasing the time-bound for \erlang, for example, leads to higher probability to explore fully the states of the Erlang chain. 
	This is turns affects the optimal scheduler and for some time-bounds no small sub-CTMDP preserving the reachability value exists.

	Naturally, whenever the algorithm explored only a small fraction of the state space, the running time was usually also smaller than the running time of the respective original algorithm. 
	The performance of our framework is heavily dependent on the parameter $\nSim$. This is due to the fact that computation of the reachability value is an expensive operation when performed many times even on small models. 
	Usually in our experiments the amount of simulations was in the order of several thousands. For more details please refer to Appendix \ref{sec:app:experiements}.

\end{description}

\ssecspace

\subsection{Greedy Search for the Smallest sub-CTMDP} \label{sec:expr:greedy}

In this section, we provide an argument that in the cases where our techniques do not perform well, the reason is not a poor choice of the relevant subsets, but rather that in such cases there are no small subsets which can be removed, at least not such that can be easily obtained.
An ideal brute-force method to ascertain this would be to enumerate all subsets of the state space, make the states of the subset absorbing ($\underline{\Ctmdp}$) or goal ($\overline{\Ctmdp}$) and then to check whether the difference in values of $\underline{\Ctmdp}$ and $\overline{\Ctmdp}$ is $\varepsilon$-close only for small subsets.
Unfortunately, this is computationally infeasible.
As an alternative, we now suggest a greedy algorithm which we use to search for the largest subset of states one could remove in reasonable time.
The results of running this algorithm is presented in the right-most column of Table \ref{tab:explored}.

The idea is to systematically pick states and observe their effect on the value when they are made absorbing ($\underline{\Ctmdp}(s)$) or goal ($\overline{\Ctmdp}(s)$).
If a state does not influence the value of the original CTMDP too much, then $\delta(s) = \val{\overline{\Ctmdp}(s)}{}{T}-\val{\underline{\Ctmdp}(s)}{}{T}$ would be small.
We first sort all the states in ascending order according to the value $\delta(s)$.
And then iteratively build $\underline{\Ctmdp}$ and $\overline{\Ctmdp}$ by greedily picking states in this order and making them absorbing (for $\underline{\Ctmdp}$) and goal (for $\overline{\Ctmdp}$).
The process is repeated until $\val{\overline{\Ctmdp}}{}{T}-\val{\underline{\Ctmdp}}{}{T}$ exceeds $\varepsilon$.

\ssecspace

\section{Conclusion}

\ssecspace

We have introduced a framework for time-bounded reachability analysis of continuous-time Markov decision processes.
This framework allows us to run arbitrary algorithms from the literature on a subspace of the original system and thus obtain the result faster, while not compromising its precision beyond a given $\varepsilon$.
The subspace is iteratively identified using simulations.
In contrast to the standard algorithms, the amount of computation needed reflects not only the model, but also the property to be checked.

The experimental results have revealed that the models often have a small subset which is sufficient for the analysis, and thus our framework speeds up all three considered algorithms from the literature.
For the exploration, already the uninformed uniform scheduler proves efficient in many settings.
However, the more informed scheduler, fed back from the analysis tools, may provide yet better results.
In cases where our technique explores the whole state space, our conjecture, confirmed by the preliminary results using the greedy algorithm, is that these models actually do not posses any small enough relevant subset of states and cannot be exploited by this approach.

This work is agnostic of the structure of the models.
Given that states are typically given by a valuation of variables, the corresponding structure could be further utilized in the search for the small relevant subset.
A step in this direction could follow the ideas of \cite{DBLP:conf/icse/PaveseBU13}, where discrete-time Markov chains are simulated, the simulations used to infer invariants for the visited states, and then the invariants used to identify a subspace of the original system, which is finally analyzed.
An extension of this approach to a non-deterministic and continuous setting could speed up the subspace-identification part of our approach and thus decrease our overhead.
Another way to speed up this process is to quickly obtain good schedulers (with no guarantees), e.g. \cite{DBLP:journals/pe/BartocciBBMS17}, use them to identify the subspace faster and only then apply a guaranteed algorithm.

\bibliographystyle{alpha-abbrev}

\bibliography{ref}

\appendix
\section{Appendix}
\label{sec:appendix}
\subsection{Proofs}
\lowerUpperSub*
\begin{proof}
Let $s$ be a state of a CTMDP. Then by definition:

\begin{align*}
	\val{}{s}{T} = \mathop{\max}\limits_{\alpha \in \Act(s)} \left\{ \int\limits_{0}^{T} \exit{s,\alpha}e^{-\exit{s,\alpha}t} \sum\limits_{s' \in S} \trans(s, \alpha, s') \cdot \val{}{s'}{T-t} \der{t} \right\}
\end{align*}

Due to the properties of $\max$ operator and integrals, for any function $f:\Realsplus \to \Realsplus$, s.\,t. $\forall t\in\Realsplus:~\val{}{s'}{t} \leqslant f(t)$ the following holds:

\begin{align}\label{fm:less}
	\val{}{s}{T} \leqslant \mathop{\max}\limits_{\alpha \in \Act(s)} \left\{ \int\limits_{0}^{T} \exit{s,\alpha}e^{-\exit{s,\alpha}t} \sum\limits_{s' \in S} \trans(s, \alpha, s') \cdot f(T-t) \der{t} \right\}
\end{align}

	The transformations of $\lowerSub{\Ctmdp}{S'}$ and $\upperSub{\Ctmdp}{S'}$ only affect those states that have at least one successor not in $S'$. Consider one of such states $s \in S'$, s.\,t. $\exists \alpha\in \Act(s), s' \in \Succ(s,\alpha) \cap (\widetilde{S} \setminus S')$. The functions $\lowerSub{\Ctmdp}{S'}$ and $\upperSub{\Ctmdp}{S'}$ make the state $s$ absorbing. If $s \in \goals$, this transformation does not affect the reachability value. If $s \not \in \goals$, then the value function of $s$ after the transformation by $\lowerSub{\Ctmdp}{S'}$ is a constant 0, and by $\upperSub{\Ctmdp}{S'}$ -- constant 1 (because the state becomes a new goal state). Since $\forall t \in \Realsplus:0 \leqslant \val{}{s'}{t} \leqslant 1$, then due to (\ref{fm:less}) the statement of the lemma follows.	
\qed
\end{proof}

\optScheduler*
\begin{proof}

We denote with $\val{\Ctmdp}{s,\sigma}{T}$ the reachability value achieved in $\Ctmdp$ under scheduler $\sigma$ starting from state $s$. Let $\straa$ be the scheduler produced by Algorithm \ref{alg:main}.

We will prove that $\val{\underline{\Ctmdp}}{}{T} \leqslant \val{\Ctmdp}{\straa}{T} \leqslant \val{\overline{\Ctmdp}}{}{T}$. 
First of all, due to Lemma \ref{lem:lower-upper-sub}: $\val{\Ctmdp}{\straa}{T} \leqslant \val{\Ctmdp}{}{T} \leqslant \val{\overline{\Ctmdp}}{}{T}$. 

We will prove now the other inequality. 
For simplicity, we consider CTMDP $\underline{\Ctmdp}$ to have the same state space and set of goal states as the state space $S$ and goal set $\goals$ of the original model. We do not modify any transition in $\underline{\Ctmdp}$. Due to the fact that the appended states are unreachable, this transformation does not affect the outcome of Algorithm \ref{alg:main}, it still produces the same values and sets of relevant states.

Let $\widetilde{S} = S' \cup \Succ(S')$, where $S'$ is the set of relevant states computed by Algorithm \ref{alg:main}.
We define $\val{\Ctmdp}{s,\sigma}{T, N}$ to be the reachability value from state $s$ for scheduler $\sigma$ and given that not more than $N$ transitions can be taken. Then $\val{\Ctmdp}{s,\sigma}{T} = \lim_{N\to\infty} \val{\Ctmdp}{s,\sigma}{T,N}$. We will prove by induction that for all $N \in \mathbb{N}_{\geqslant 0}, s \in S, T \in \mathbb{R}_{>0}$:

$$
\val{\underline{\Ctmdp}}{s,\lstraaopt}{T,N} \leqslant \val{\Ctmdp}{s,\straa}{T,N},
$$
where $\lstraaopt$ is the optimal scheduler for $\val{\underline{\Ctmdp}}{}{T}$. 

\begin{itemize}
	\item[$N=0$:] Since the state space of $\underline{\Ctmdp}$ and $\Ctmdp$ coincide, as well as the set of goal states, then obviously
	\begin{align*}
		\forall s \not \in \goals: &\val{\underline{\Ctmdp}}{s,\lstraaopt}{T,0} = \val{\Ctmdp}{s,\straa}{T,0} = 0\\
		\forall s \in \goals: &\val{\underline{\Ctmdp}}{s,\lstraaopt}{T,0} = \val{\Ctmdp}{s,\straa}{T,0} = 1\\
	\end{align*}
	
	\item[$N>0$:] For $s \not \in \widetilde{S}, s \in \goals: \val{\underline{\Ctmdp}}{s,\lstraaopt}{T,N} = \val{\Ctmdp}{s,\straa}{T,N} = 1$. For $s \not \in \widetilde{S}, s \not \in \goals:~\val{\underline{\Ctmdp}}{s,\lstraaopt}{T,N} = 0 \leqslant \val{\Ctmdp}{s,\straa}{T,N}$. Let $s \in \widetilde{S}$, we denote with $\trans_{\Ctmdp}(s, \alpha, s')$ the discrete transition relation in the CTMDP $\Ctmdp$. By definition of the reachability value:
	
\begin{alignat*}{2}
	\val{\underline{\Ctmdp}}{s,\lstraaopt}{T,N} 
	&= \int\limits_{0}^{T} \exit{s,\alpha}e^{-\exit{s,\alpha}t} &&\sum\limits_{s' \in S} \trans_{\underline{\Ctmdp}}(s, \alpha, s') \cdot \val{\underline{\Ctmdp}}{s', \lstraaopt}{T-t,N-1} \der{t}\\
	&= \int\limits_{0}^{T} \exit{s,\alpha}e^{-\exit{s,\alpha}t}
		\Big( 	&&\sum\limits_{s' \in \widetilde{S}} \underbracket{\trans_{\underline{\Ctmdp}}(s, \alpha, s')}_{=\trans_{\Ctmdp}(s, \alpha, s')} \cdot \underbracket{\val{\underline{\Ctmdp}}{s',\lstraaopt}{T-t, N-1}}_{\text{IH: } \leqslant \val{\Ctmdp}{s',\straa}{T-t, N-1}}  + \\
				& &&\underbracket{\sum\limits_{s' \not \in \widetilde{S}} \trans_{\underline{\Ctmdp}}(s, \alpha, s') \cdot \val{\underline{\Ctmdp}}{s',\lstraaopt}{T-t, N-1}}_{=0}
		\Big) \der{t} \\
	&\leqslant \int\limits_{0}^{T} \exit{s,\alpha}e^{-\exit{s,\alpha}t}
		\Big( 	&&\sum\limits_{s' \in \widetilde{S}} \trans_{\Ctmdp}(s, \alpha, s') \cdot \val{\Ctmdp}{s',\straa}{T-t, N-1}  + \\
				& &&\underbracket{\sum\limits_{s' \not \in \widetilde{S}} \trans_{\Ctmdp}(s, \alpha, s') \cdot \val{\Ctmdp}{s',\straa}{T-t, N-1}}_{\geqslant 0}
		\Big) \der{t} \\
	&= \val{\Ctmdp}{s,\straa}{T,N}
\end{alignat*}
	
\end{itemize}

\end{proof}

\correctness*
\begin{proof}

We at first argue about the correctness of the algorithm w.\,r.\,t. the instantiation $\ChooseScheduler(\straa) = \straa$.

First of all, let us notice that time-bounded reachability problem for CTMDP can be approximated up to arbitrarily small $\varepsilon$ by step-bounded reachability for discrete time MDP \cite{DBLP:conf/qest/NeuhausserZ10}. This is achieved by the so called discretisation approach. 

Given this discrete MDP, its step-bounded reachability can be computed by the algorithm from \cite{atva}. Notice that the back-propagation of values over a path in this algorithm is the same as running a classical MDP reachability algorithm \cite{Puterman} on 2 MDPs: for the lower bound all the states outside of the path are made absorbing, and for the upper bound - they are made goal states. Therefore one could as well sample several paths and back-propagate the values over MDPs where all states outside of the sampled states are made absorbing, or goal. This is the discrete analog of steps \ref{alg:main:lower-upper}-\ref{alg:main:straasim} of Algorithm \ref{alg:main}. With $\varepsilon \to 0$ the limiting behaviour of the algorithm coincides with Algorithm \ref{alg:main}. Given that \cite{atva} converges almost surely, Algorithm \ref{alg:main} as well converges almost surely.

We will now prove the correctness of the algorithm w.\,r.\,t. the instantiation $\ChooseScheduler(\straa) = \straauni$.

When the relevant subset is obtained with $\straauni$, an action is picked with uniform probability and the the next state is sampled according to the respective distribution. As step \ref{alg:main:rel} of Algorithm \ref{alg:main} may potentially be run infinitely often, the uniform sampling would eventually cover all states reachable from the initial state. In such a case, $S' = S$ and hence successors of $S'$ are already included in $S'$. Hence, the $\lowerSub{\Ctmdp}{S'}$ and  $\upperSub{\Ctmdp}{S'}$ trivially return $\Ctmdp$. Therefore, $\val{\underline{\Ctmdp}}{}{T} = \val{\Ctmdp}{}{T} = \val{\overline{\Ctmdp}}{}{T}$.
\qed
\end{proof}

\subsection{Details of Experiments} \label{sec:app:experiements}

\paragraph{Experimental Setup.} The experiments were on a multi-core Intel Xeon Server with sufficient RAM even though our experiments run only on a single core. Our algorithms are implemented in \prism\ and we use \imca\ in order to solve the partial models. \prism\ leverages the multiple cores to parse and load the model, after which the model checking happens only on a single core. \imca\ on the other hand runs completely on a single core. 
In order to make the comparison fairer, we measure only the time taken for performing the computations and not the time taken to parse and load the model file by the two tools. \prism\ is allotted 8GB of memory while no restriction is set for \imca. For each parameter configuration reported in the paper, the experiment is run at least 5 times and the median of the 5 runs are presented for runtime as well as the explored states.

\paragraph{Other Parameters: T and $\nSim$.} These parameters were briefly explained in Section \ref{sec:experiments}. Here we discuss the practical implications and the choice of the parameters. We choose time bound, T in such a way that that the probability of reaching the goals in each model is non-trivial (i.e. neither 0 nor 1). 
In Appendix \ref{sec:app:more-graphs}, we mention some of the time bounds we used. For the remaining models, \sjs-2-9 and \ps-4-8-\all, we used T=2 and T=1000 respectively. Another parameter which had to be set manually was the number of samples chosen, $\nSim$ in Algorithm \ref{alg:get-relevant}. 
This was chosen in such a way that \prism\ didn't query \imca\ for solving sub-CTMDPs too many times, since it was a major time-consumer in the chain. For  \sjs-2-9 and \ps-4-8-\all, $\nSim$ was chosen to be 40,000 and 20,000 respectively.

\paragraph{Toolchain Efficiency.} As hinted in the Section \ref{sec:app:experiements}, our prototype toolchain is far from being efficient. Whenever \prism\ wants a sub-CTMDP solved, it writes it into a file and calls \imca\ on it. Hence, every computation of the value (Algorithm \ref{alg:main}, line \ref{alg:main:lower-upper-sch}) is accompanied by the overhead of writing the file and loading the sub-CTMDP in \imca. We expect that a direct and continuous communication channel between the two tools would be able to save non-trivial amount of time.

\subsection{Additional experimental evaluation of scalability} \label{sec:app:more-graphs}

\begin{figure}
		\centering
	\begin{minipage}{0.45\textwidth}
	{\renewcommand{\arraystretch}{1.5}%
		\setlength{\tabcolsep}{8pt}
		\begin{tabular}{rrrrr}
			\toprule
			& & \multicolumn{2}{c}{Explored} \\
			\cmidrule{3-4}
			\thead{k} & \thead{States} & \thead{$\straauni$} & \thead{\%} \\ \midrule
			50000 & 500k & 553 & 1.11 \\
			100000 & 100k & 569 & 0.57 \\
			250000 & 250k & 562 & 0.22 \\
			500000 & 500k & 568 & 0.11 \\
			1000000 & 1,000k & 564 & 0.06 \\ 
			\bottomrule
		\end{tabular}
	}
\end{minipage}
	\hfill
	\pgfplotstableread{figures/erlang.data}
	\datatable
	\begin{minipage}{0.45\textwidth}
		\centering
		\subfloat{
			\begin{tikzpicture}
			\begin{axis}[
			height=5.5cm,
			ymode=log,
			log ticks with fixed point,
			xlabel={k ($\times10^3$)},
			ylabel={Time (s)},
			xmin=0, xmax=1000,
			ymin=0, ymax=100,
			xtick={100,250,500,1000},
			ytick={1, 10, 100},
			legend pos=outer north east,
			ymajorgrids=true,
			grid style=dotted,
			]
			
			\addplot [color = graph1, mark=*] table [y = TBU] from \datatable;
			\addlegendentry{\tbu}
			\addplot [color = graph2, mark=square*] table [y = TBB] from \datatable;
			\addlegendentry{\tbb}
			\addplot [color = graph3, mark=triangle*] table [y = M6-TBU] from \datatable;
			\addlegendentry{\itbu}
			\addplot [color = graph4, mark=diamond*] table [y = M6-TBB] from \datatable;
			\addlegendentry{\itbb}
			
			\end{axis}
			\end{tikzpicture}%
		}
	\end{minipage}
	\caption{Erlang Stages (T=50, $\nSim=1000$)}
\end{figure}
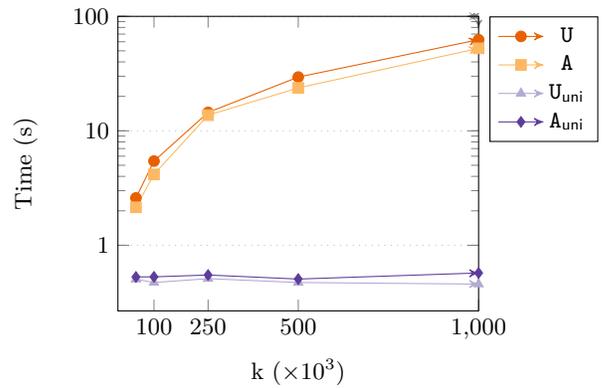

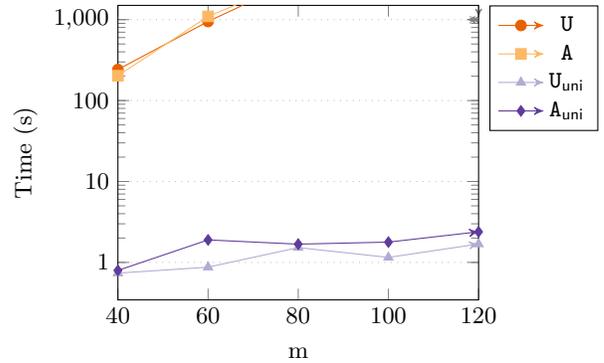
\begin{figure}
	\centering
\begin{minipage}{0.45\textwidth}
		{\renewcommand{\arraystretch}{1.5}%
							\setlength{\tabcolsep}{11pt}
			\begin{tabular}{rrrrr}
				\toprule
				& & \multicolumn{2}{c}{Explored} \\
				\cmidrule{3-4}
				\thead{m} & \thead{States} & \thead{$\straauni$} & \thead{\%} \\ \midrule
				40 & 166k & 99 & 0.06 \\
				60 & 372k & 102 & 0.03 \\
				80 & 659k & 119 & 0.02 \\
				100 & 1,029k & 107 & 0.01 \\
				120 & 1,479k & 100 & 0.01 \\ 
				\bottomrule
			\end{tabular}
		}
	\end{minipage}
	\hfill
	\pgfplotstableread{figures/gfs.data}
	\datatable
	\begin{minipage}{0.45\textwidth}
		\centering
		\subfloat{
			\begin{tikzpicture}
			\begin{axis}[
			ymode=log,
			log ticks with fixed point,
			height=5.5cm,
			xlabel={m},
			ylabel={Time (s)},
			xmin=40, xmax=120,
			ymin=0, ymax=1500,
			xtick={40, 60, 80, 100, 120},
			ytick={1, 10, 100, 1000},
			legend pos= outer north east,
			ymajorgrids=true,
			grid style=dotted,
			]
			
			\addplot [color = graph1, mark=*] table [y = TBU] from \datatable;
			\addlegendentry{\tbu}
			\addplot [color = graph2, mark=square*] table [y = TBB] from \datatable;
			\addlegendentry{\tbb}
			\addplot [color = graph3, mark=triangle*] table [y = M6-TBU] from \datatable;
			\addlegendentry{\itbu}
			\addplot [color = graph4, mark=diamond*] table [y = M6-TBB] from \datatable;
			\addlegendentry{\itbb}
			
			\end{axis}
			\end{tikzpicture}
		}
	\end{minipage}
	\caption{Google File-system (n=10,000, s=5,000, T=2, $\nSim=1000$)}
\end{figure}

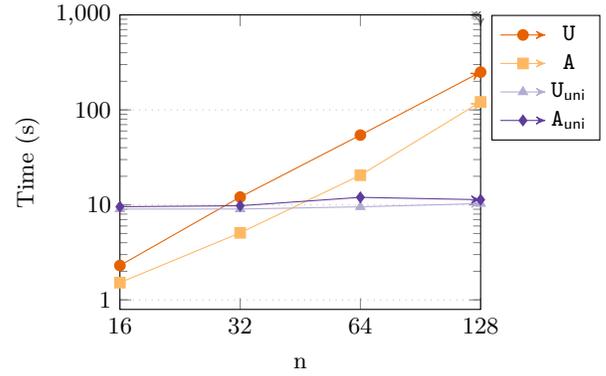
\begin{figure}
		\begin{minipage}{0.45\textwidth}
		\centering
		{\renewcommand{\arraystretch}{1.5}%
							\setlength{\tabcolsep}{9pt}
			\begin{tabular}{rrrrr}
				\toprule
				& & \multicolumn{2}{c}{Explored} \\
				\cmidrule{3-4}
				\thead{n} & \thead{States} & \thead{$\straauni$} & \thead{\%} \\ \midrule
				16 & 10,130 & 96 & 0.95 \\
				32 & 38,674 & 112 & 0.29 \\
				64 & 151,058 & 168 & 0.11 \\
				128 & 597,010 & 274 & 0.05 \\ 
				\bottomrule
			\end{tabular}
		}
	\end{minipage}
	\hfill
	\pgfplotstableread{figures/ftwc.data}
	\datatable
	\begin{minipage}{0.45\textwidth}
		\centering
		\subfloat{
			\begin{tikzpicture}
			\begin{axis}[
			xmode=log,
			ymode=log,
			log ticks with fixed point,
			height=5.5cm,
			xlabel={n},
			ylabel={Time (s)},
			xmin=16, xmax=128,
			ymin=0, ymax=1000,
			xtick={16,32,64,128,256},
			ytick={0, 1, 10, 100, 1000},
			legend pos= outer north east,
			ymajorgrids=true,
			grid style=dotted,
			]
			
			\addplot [color = graph1, mark=*] table [y = TBU] from \datatable;
			\addlegendentry{\tbu}
			\addplot [color = graph2, mark=square*] table [y = TBB] from \datatable;
			\addlegendentry{\tbb}
			\addplot [color = graph3, mark=triangle*] table [y = M6-TBU] from \datatable;
			\addlegendentry{\itbu}
			\addplot [color = graph4, mark=diamond*] table [y = M6-TBB] from \datatable;
			\addlegendentry{\itbb}
			
			\end{axis}
			\end{tikzpicture}
		}
	\end{minipage}
	\caption{Fault Tolerant Workstation Cluster (T=1,000, $\nSim=10,000$)}
\end{figure}

\begin{figure}
		\begin{minipage}{0.45\textwidth}
		\centering
		{\renewcommand{\arraystretch}{1.5}%
							\setlength{\tabcolsep}{9pt}
			\begin{tabular}{rrrrr}
				\toprule
				& & \multicolumn{2}{c}{Explored} \\
				\cmidrule{3-4}
				\thead{qs} & \thead{States} & \thead{$\straaopt$} & \thead{\%} \\ \midrule
				16 & 1,591,817 & 18,714 & 1.18 \\
				18 & 2,496,681 & 18,915 & 0.76 \\
				20 & 3,741,449 & 21,088 & 0.56 \\
				22 & 5,402,153 & 21,361 & 0.40 \\
				24 & 7,562,505 & 23,309 & 0.31 \\ 
				\bottomrule
			\end{tabular}
		}
	\end{minipage}
	\hfill
	\pgfplotstableread{figures/polling.data}
	\datatable
	\begin{minipage}{0.45\textwidth}
		\centering
		\subfloat{
			\begin{tikzpicture}
			\begin{axis}[
			xmode=log,
			log ticks with fixed point,
			height=5.5cm,
			xlabel={qs},
			ylabel={Time (s)},
			xmin=16, xmax=24,
			ymin=0, ymax=400,
			xtick={16,18, 20, 22, 24},
			ytick={0, 100, 200, 300, 400},
			legend pos= outer north east,
			ymajorgrids=true,
			grid style=dotted,
			]

			\addplot [color = graph1, mark=*] table [y = TBU] from \datatable;
			\addlegendentry{\tbu}
			\addplot [color = graph2, mark=square*] table [y = TBB] from \datatable;
			\addlegendentry{\tbb}
			\addplot [color = graph4, mark=triangle*] table [y =  M6-TBB-FEEDBACK] from \datatable;
			\addlegendentry{\itbbplus}
			
			\end{axis}
			\end{tikzpicture}
		}
	\end{minipage}
	\caption{Polling System (jt=4, g=\one, T=2, $\nSim=60,000$)}
\end{figure}

\end{document}